\documentclass[pdflatex,sn-mathphys-num]{sn-jnl}

\usepackage{graphicx}%
\usepackage{multirow}%
\usepackage{amsmath,amssymb,amsfonts}%
\usepackage{amsthm}%
\usepackage{mathrsfs}%
\usepackage[title]{appendix}%
\usepackage{xcolor}%
\usepackage{textcomp}%
\usepackage{manyfoot}%
\usepackage{booktabs}%
\usepackage{algorithm}%
\usepackage{algorithmicx}%
\usepackage{algpseudocode}%
\usepackage{listings}%
\usepackage{enumitem}
\usepackage{mathabx}
\usepackage{comment} 

\renewcommand{\Vec}[1]{\mbox{\boldmath $#1$}}
\renewcommand{\proofname}{\textbf{Proof.}}
\newcommand{\Prec}{\mbox{\rm Prc}}
\newcommand{\Merg}{\mathrm{Merg}}
\newcommand{\ComG}{{\rm K}}

\newcommand{\Exp}{\mathbb E}
\newcommand{\Prob}{\mbox{\rm Pr}}

\algnewcommand\Input{\item[\textbf{Input:}]} 
\algnewcommand\Output{\item[\textbf{Output:}]}

\newtheorem{Theorem}{Theorem}[section]

\newtheorem{Lemma}[Theorem]{Lemma}
\newtheorem{Cor}[Theorem]{Corollary}

\newtheorem{Assumption}[Theorem]{Assumption}

\theoremstyle{thmstyleone}%
\theoremstyle{thmstyletwo}%
\newtheorem{remark}[Theorem]{Remark}%

\theoremstyle{thmstylethree}%

\begin{document}

\title[Article Titl]{
Approximating the Shapley Value \\
of Minimum Cost Spanning Tree Games: \\
An FPRAS for Saving Games}


\author[1]{\fnm{Takumi} \sur{Jimbo}}

\author*[1]{\fnm{Tomomi} \sur{Matsui}}\email{matsui.t.af@m.titech.ac.jp}

\affil[1]{\orgdiv{Department of Industrial Engineering and Economics}, \\
\orgname{Institute of Science Tokyo}, 
\orgaddress{\street{Meguro-ku}, \city{Tokyo}, 
\postcode{152-8552}, 
\country{Japan}}}


\abstract{
In this research, we address the problem 
    of computing the Shapley value 
    in minimum‑cost spanning tree (MCST) games. 
We introduce the saving game as a key framework 
    for approximating the Shapley value. 
By reformulating MCST games 
    into their saving‑game counterparts, 
    we obtain structural properties 
    that enable multiplicative (relative‑error) approximation. 
Building on this reformulation, 
    we develop a Monte Carlo based 
    Fully Polynomial-time Randomized 
    Approximation Scheme (FPRAS) 
    for the Shapley value.
}


\keywords{minimum cost spanning tree game, Shapley value, Monte Carlo method, FPRAS}

\maketitle

\vspace{-5ex}

\begin{center}
\large
\date{\today}
\end{center}

\medskip

\section{Introduction} \label{section:intro}

Minimum spanning tree problems arise in network design settings
where a group of agents, located at different nodes, require a service
provided by a common source and must share the cost of the connecting network.
Beyond the problem of constructing a minimum  spanning tree, a central
question concerns how the total connection cost should be allocated among
the participating agents.

Claus and Kleitman~\cite{claus1973cost} first examined cost allocation 
    in the minimum spanning tree.
Bird~\cite{bird1976cost} introduced a game-theoretic formulation 
    of minimum spanning tree problems by associating 
    a cooperative cost allocation game with each instance.
This line of work laid the foundation 
    for analyzing minimum spanning tree problems 
    by formulating them as 
    a minimum‑cost spanning tree game 
    (MCST game), 
    thereby enabling the use 
    of cooperative game‑theoretic solution concepts 
    to investigate fairness and stability 
    in cost sharing.   
Further theoretical developments 
    were provided in~\cite{granot1981minimum,granot1982relationship,granot1984core}.
Early works and reviews are due to~\cite{aarts1993irreducible,curiel1997minimum,feltkamp2000bird,borm2001operations}.
For more recent developments, 
    see the reviews~\cite{trudeau2013characterizations,trudeau2019shapley,bergantinos2021review}.

The Shapley value proposed in~\cite{shapley1953value}
    is a solution concept in cooperative game theory.
Despite its strong normative appeal, 
    computing the Shapley value is
    computationally challenging.
In particular, Ando~\cite{ando2012computation} showed that 
    the exact calculation of the Shapley
    value in MCST games is \#P-hard, 
    even under highly restricted settings,
    indicating that an efficient exact algorithm 
    is unlikely to exist in general.

To circumvent this computational difficulty, 
    existing studies have mainly
    followed two distinct directions.
The first line of research focuses 
   on identifying restricted classes of MCST
   games for which the Shapley value 
   can be computed exactly 
    in polynomial time~\cite{ando2010reduction,ando2012computation}.
%
The second line of research addresses approximation algorithms
   for general MCST games.
Ando and Takase~\cite{ando2020monte} proposed 
    Monte Carlo based sampling algorithms
    that approximate the Shapley value
    with additive-error (absolute-error) guarantees.
Furthermore, Takase and Ando~\cite{takase2019approximaiton} 
    introduced a deterministic polynomial-time
    approximation method based on structural approximations using chordal graphs.
While these approaches are applicable to general MCST games, 
    the former relies on random sampling 
    and provides only additive-error guarantees. 

In this study, we investigate Monte Carlo–based approximation 
    of the Shapley value in MCST games, 
    with a focus on achieving relative‑error guarantees. 
Existing Monte Carlo methods provide 
    only additive‑error bounds, 
    which can be inadequate when Shapley values 
    vary widely across players or instances. 
To address this issue, we introduce the saving game,  
    which is a framework that converts the cost formulation 
    into a value formulation 
    based on coalition savings. 
For the saving game associated with the MCST game, 
    this reformulation enables multiplicative (relative‑error) 
    approximation of the Shapley value. 
Computational experiments further show that 
    the proposed method exhibits empirical performance 
    consistent with an FPRAS.

The rest of this paper is organized as follows. 
Section~\ref{section:pre} introduces 
 the necessary notation 
 and reviews fundamental concepts 
 related to MCST games, 
 saving games, and the Shapley value.
Section~\ref{section:NullPlayer} 
    establishes structural properties 
    of MCST‑saving games, 
    including a characterization 
    of null players and lower bounds 
    on the Shapley value. 
Section~\ref{section:MonteCarlo} 
    presents the Monte Carlo–based FPRAS 
    and analyzes its computational complexity. 
Section~\ref{section:comp}
    reports computational experiments 
    demonstrating the practical performance 
    of the proposed method. 
Section~\ref{section:conclusion}
    summarizes several results 
    obtained in this paper.


\section{Preliminaries}\label{section:pre}

This section describes notations, definitions, and 
 some known properties.
For any finite set $V$, 
    let $\ComG [V]$ denote the undirected complete graph
    with vertex set $V$ and 
    edge set $E=\left( \begin{array}{c} V \\ 2 \end{array} \right)$.
Each edge $e=\{i,j\} \in E$ is assigned a non-negative edge-weight,  
    denoted by $w(e)$, $w(i,j)$ and/or $w(j,i)$.
A pair $\langle \ComG [V], \Vec{w} \rangle$ is called
    a {\em non-negatively weighted complete graph}.
If the edge-weight satisfies  
    $\Vec{\widetilde{w}} \in \{0,1\}^E$, 
    then 
    $\langle \ComG [V], \Vec{\widetilde{w}} \rangle$
    is a {\em 0-1 weighted complete graph}.

A graph $G=(V,E)$ is called a {\em tree}
    if it is connected and contains no cycles.
For a graph $G=(V, E)$, 
    a subgraph $G'=(V', E')$ of $G$
    is called a {\em spanning tree} if 
    $V=V'$ and $G'$ is a tree.
A spanning tree is identified by its set of edges.
For any edge subset $E'$ 
    of a non-negatively weighted complete graph 
    $\langle \ComG [V], \Vec{w} \rangle$, 
    the sum of edge weights 
    $\sum_{e\in E'} w(e)$
    is  referred to as the {\em cost} of $E'$.
A minimum spanning tree problem defined on 
    $\langle \ComG [V], \Vec{w} \rangle$
    finds a spanning tree $E'$ of $\ComG [V]$
    that minimizes the cost $\sum_{e\in E'} w(e)$.

Let $N=\{1,2,\dots,n\}$ be a set of players. 
Fix a distinguished node $r$, referred to as the {\em root}, 
    and set $N_r=\{ r,1,2,\dots ,n\}$.
Let $\langle \ComG [N_r], \Vec{w}\rangle$ 
    be a non-negatively weighted complete graph.
For each non-empty player subset $S\subseteq N$, 
    let $c(S)$ denote the cost of a minimum spanning tree 
    of the induced subgraph $\ComG [S \cup \{r\}]$
    with respect to the edge-weight~$\Vec{w}$.
We define $c(\emptyset )=0.$
The {\em minimum cost spanning tree game} (MCST game)
    is the cooperative (cost) game given by the pair $(N,c)$.
If the edge-weights are restricted
    to the set $\{0,1\}$, 
    then the resulting game $(N,c)$ is called 
    the {\em simple} minimum cost spanning tree game 
    (simple MCST game). 

Given a cooperative cost game, 
    the associated {\em saving game} transforms 
    this cost perspective 
    into a value perspective by measuring how much cost 
    the coalition saves compared to acting individually.
Formally, we define the saving game $(N,v)$ 
    by the characteristic function
    $v(S)=\sum_{i\in S} c(\{i\}) - c(S)$, 
    with $v(\emptyset )=0.$
In this paper, we study the saving game 
    associated 
    with the minimum cost spanning tree game $(N,c)$. 
This game, denoted by $(N,v)$, 
    will be referred to as the {\em MCST‑saving game}, 
    which was discussed in a seminal paper by Bird~\cite{bird1976cost}. 
When the underlying game is a simple MCST game, 
    the corresponding saving game
    will be called the simple MCST‑saving game.

The definition of the saving game $(N,v)$ directly implies that
    $v(\{j\})=c(\{j\})-c(\{j\})=0$ $(\forall j \in N)$.
The MCST-saving game satisfies that
    $v(S)=\sum_{j \in S} w(r,j)-c(S)$ 
    $(\emptyset \neq \forall S \subseteq N)$.
To begin with, we note that the MCST-saving game is 
    nonnegative and super-additive (see~\cite{bird1976cost}).
For any non-empty set $S\subseteq N,$
    the edge set $\{\{r,i\} \mid i \in S\}$ forms 
    a spanning tree on $\ComG [S \cup \{r\}].$ 
Thus, $v(S)=\sum_{i \in S} w(r,i) -c(S) \geq 0.$
Moreover, for any disjoint pair of coalitions $S,T \subseteq  N$,  
    we have
\[
    v(S \cup T) = \sum_{i\in S \cup T} c({i}) - c(S \cup T)
             \geq \sum_{i \in S} c({i}) - c(S) + \sum_{i \in T}c({i}) - c(T)
             = v(S) + v(T),
\]   
    since the cost function $c$ is sub-additive 
    with respect to spanning trees.
The non-negativity together with  super-additivity implies 
    the monotonicity of $v$.
    We see that, if $N \supseteq T \supseteq S$, then  
\[
    v(T)=v(T \setminus S \cup S)\geq v(T \setminus S) + v(S) \geq v(S). 
\]


The Shapley value, formally defined below, 
    is a payoff vector introduced 
    by Shapley~\cite{shapley1953value}.
A {\em permutation}  $\pi$  of players in $N$ 
 is a bijection 
    $\pi: \{1,2,\ldots ,n\} \rightarrow N$, 
    and we write  
   $\pi (j)$ for the player in position $j$ in the permutation $\pi$.
Let $\Pi_N$ be the set of all permutations defined on $N$.
The Shapley value of characteristic function form game 
    $(N,v)$ is a payoff vector    
    $\phi=(\phi_1, \phi_2, \ldots , \phi_n)$ defined by
\[
  \phi_i=\frac{1}{n!} \sum_{\pi \in  \Pi_N} \Merg (\pi, i) 
    \;\; (\forall i \in N)
\]
where $\Merg(\pi, i)=v(\Prec (\pi,i) \cup \{i\})-v(\Prec (\pi,i))$, 
 called the marginal contribution of player $i$ 
    in permutation $\pi$, 
    where $\Prec (\pi, i)$ is the set of players in $N$ 
    which precede $i$ in the permutation $\pi$. 

The monotonicity of MCST-saving games imply that 
    the corresponding Shapley value is a non-negative vector.
Next, we identify when 
    the Shapley value is equal to zero.
A player $i\in N$ is
    called a {\em null player} in the game $(N, v)$
    whenever $v(S \cup \{i \})=v(S)$ 
    for all $S \subseteq N \setminus \{i\}.$

\begin{Lemma} \label{Lemma:Null=Shap0}
In an MCST-saving game, 
    a player $i \in N$ is a null player
    if and only if his/her Shapley value is equal to zero, 
    i.e. $\phi_i=0.$
\end{Lemma}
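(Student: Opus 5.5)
The plan is to handle the two directions separately, relying throughout on the monotonicity of the MCST-saving game established above. Monotonicity says that $v(T)\ge v(S)$ whenever $T\supseteq S$. In particular every marginal contribution $\Merg(\pi,i)=v(\Prec(\pi,i)\cup\{i\})-v(\Prec(\pi,i))$ is non-negative.

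\emph{If $i$ is a null player, then $\phi_i=0$.} This direction is immediate from the definition. For every permutation $\pi$, the set $\Prec(\pi,i)$ is a subset of $N\setminus\{i\}$, so the null-player condition gives $\Merg(\pi,i)=0$. Averaging over $\Pi_N$ then yields $\phi_i=0$.

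\emph{If $\phi_i=0$, then $i$ is a null player.} Since $\phi_i$ is an average of non-negative terms, $\phi_i=0$ forces $\Merg(\pi,i)=0$ for every $\pi\in\Pi_N$. Now fix an arbitrary $S\subseteq N\setminus\{i\}$. Choose a permutation $\pi$ that lists the players of $S$ first, in any order, then $i$, then the remaining players. For this $\pi$ we have $\Prec(\pi,i)=S$, and therefore
\[
v(S\cup\{i\})-v(S)=\Merg(\pi,i)=0 .
\]
As $S$ was arbitrary, $i$ is a null player. The case $S=\emptyset$ is also consistent with the fact that $v(\{i\})=0$.

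The only substantive ingredient is the non-negativity of the marginal contributions. This follows from super-additivity together with non-negativity of $v$, both of which were shown earlier for MCST-saving games. The point where that argument needs care is the use of super-additivity at the empty coalition. Since $v(\emptyset)=0$ by definition, the inequality $v(S\cup T)\ge v(S)+v(T)$ remains valid when one of the coalitions is empty. With that in place, the monotonicity argument goes through for all $S\subseteq T$, and nothing further is needed.
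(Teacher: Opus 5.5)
Your proof is correct and follows the same route as the paper: the forward direction comes directly from the definition, and the converse uses monotonicity of $v$ to force every marginal contribution to vanish. The only difference is that you spell out the permutation with $\Prec(\pi,i)=S$ realizing each $S\subseteq N\setminus\{i\}$, a step the paper leaves implicit.
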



\begin{proof}
In any cooperative game,
    if $i$ is a null player, 
    then by the definition of
    the Shapley value we have $\phi_i = 0$.  
Conversely, in an MCST--saving game, 
    if $\phi_i = 0$, 
    then by the monotonicity of $v$,  
    player $i$ cannot make a positive marginal contribution to any coalition.  
Hence $i$ must be a null player.
\end{proof}

A player $i \in N$ is a null player 
    of an MCST-saving game $(N, v)$
    if and only if 
\[
\begin{aligned}
 c(S\cup \{i\})  &=
     -v(S\cup \{i\})+\sum_{j\in S}w(r,j)+w(r,i)
    =-v(S)  +\sum_{j\in S}w(r,j)+w(r,i) \\
  &= c(S)+w(r,i) \;\;\;\; 
  (\forall S \subseteq N\setminus \{i\}),
\end{aligned}
\]
that is,
    a minimum spanning tree 
    of $\ComG [S\cup \{r, i\}]$
    is obtained by adding edge $\{r,i\}$
    to any minimum spanning tree
    of $\ComG [S \cup \{r\}]$
    for all $S \subseteq N\setminus \{i\}.$

In contrast to the saving game, 
    MCST games fail to satisfy the property stated in Lemma~\ref{Lemma:Null=Shap0}. 
To illustrate this point, 
    consider the MCST game $(N,c)$ defined by
    $\langle \ComG[N_r], \Vec{w} \rangle$ 
    with $N_r=\{r,1,2\}$ and edge weights 
    $w(r,1)=1,\; w(1,2)=2,\; w(r,2)=4.$ 
Since $c(\{1\})=1,\; c(\{2\})=4,\; c(\{1,2\})=3,$ 
    the Shapley value of $(N, c)$ is  
    $(\varphi_1,\varphi_2)=(0,3).$ 
Yet player~1 is clearly not a null player of $(N, c)$. 
This observation underscores the intractability of the problem 
    of calculating the Shapley value in MCST games 
    and thus provides the motivation 
    for introducing the saving game, 
    which subsequently allows the construction of an FPRAS.
Let $(N, v)$ denote the saving game corresponding 
    to the characteristic function form game introduced above.
It is given by
    $v(\{1\})=v(\{2\})=0$ and $v(\{1,2\})=(1+4)-3=2.$
The Shapley value of $(N, v)$ is $(\phi_1, \phi_2)=(1,1)$.
Moreover, this payoff vectors admits the decomposition 
    $(\phi_1, \phi_2)=(1,1)=(1,4)-(0,3)=(w(r,1),w(r,2))-(\varphi_1, \varphi_2).$

We now describe, without proof,  
    the relationships between a game 
    in characteristic function form 
    and its corresponding saving game.
\begin{remark}
    
Let $(N,v)$ be a saving game 
    of a characteristic function form game $(N,c).$

\begin{enumerate}
\item  A player $i$ is a null player in $(N, v)$
    if and only if $i$ is a dummy player in $(N, c)$,
    that is 
\[
    c(S \cup \{i\})=c(\{i\})+ c(S) \;\;
        (\forall S \subseteq N \setminus \{i\}).
\]

\item
The Shapley value  $\phi$ of $(N, v)$ satisfies 
$\phi_i+\varphi_i=c(\{i\}) \;\; (\forall i \in N), $
where $\varphi$ denotes the Shapley value of $(N, c)$.
\end{enumerate}
\end{remark}

\section{Null Players in MCST-Saving Games} 
\label{section:NullPlayer}

In this section, we establish 
    necessary and sufficient conditions 
    for a player to be a null player 
    in an MCST-saving game.
This result is crucial for constructing 
    the FPRAS in the later section.

\begin{Theorem}\label{Theorem:Null-Player}
    Let $(N,v)$ be an MCST-saving game defined 
    by the non-negatively weighted complete graph
    $\langle \ComG [N_r], \Vec{w} \rangle$.
For any player $i \in N$, 
    the following statements are equivalent:
\begin{description}
\item[\rm (0)] the Shapley value of player $i$ is zero; 
\item[\rm (1)] player $i$ is a null player
    {\rm (i.e., $\forall S \subseteq N\setminus \{i\}, v(S \cup \{i\})=v(S)$)}; 
\item[\rm (2)] $\forall j \in N\setminus \{i\}$, $v(\{i,j\})=v(\{j\})$;
\item[\rm (3)] $\forall j \in N\setminus \{i\}$,  
    $w(r,i)\leq w(i,j) \geq w(r,j)$.
\end{description}
\end{Theorem}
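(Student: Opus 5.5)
The plan is to prove the cycle $(0)\Leftrightarrow(1)\Rightarrow(2)\Leftrightarrow(3)\Rightarrow(1)$.

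\textbf{The easy implications.}
\begin{itemize}
\item $(0)\Leftrightarrow(1)$ is exactly Lemma~\ref{Lemma:Null=Shap0}.
\item $(1)\Rightarrow(2)$ is immediate: take $S=\{j\}$.
\item $(2)\Leftrightarrow(3)$ is a direct computation. Since $v(\{j\})=0$, condition (2) says $v(\{i,j\})=0$. Now
\[
v(\{i,j\}) = w(r,i)+w(r,j)-c(\{i,j\}),
\]
and $c(\{i,j\})$ is the sum of the two smallest weights among $w(r,i)$, $w(r,j)$, $w(i,j)$. Hence $v(\{i,j\})=\max\{w(r,i),w(r,j)\}-\min\{w(i,j),\max\{w(r,i),w(r,j)\}\}$. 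This is zero if and only if $w(i,j)\ge \max\{w(r,i),w(r,j)\}$, which is condition (3).
\end{itemize}

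\textbf{The main step: $(3)\Rightarrow(1)$.} By the reformulation given after Lemma~\ref{Lemma:Null=Shap0}, it suffices to show $c(S\cup\{i\})=c(S)+w(r,i)$ for every $S\subseteq N\setminus\{i\}$. The inequality $\le$ is trivial: append edge $\{r,i\}$ to a minimum spanning tree of $\ComG[S\cup\{r\}]$.

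For $\ge$, take a minimum spanning tree $T$ of $\ComG[S\cup\{r,i\}]$, and let $i$ have neighbors $u_1,\dots,u_d$ in $T$. Deleting $i$ splits $T$ into $d$ components, and exactly one of them contains $r$.
\begin{itemize}
\item Every component $C_k$ not containing $r$ contains a non-root vertex $u_k$, since $u_k\ne r$ in that case. Replace the edge $\{i,u_k\}$ by $\{r,u_k\}$. By (3), $w(r,u_k)\le w(i,u_k)$, so the cost does not increase.
\item The resulting graph is the forest on $S\cup\{r\}$ obtained from $T-i$ with each non-root component hooked to $r$. It is a spanning tree of $\ComG[S\cup\{r\}]$, together with one remaining edge $\{i,u\}$ attaching $i$ to the root component.
\item If $u=r$, that edge costs $w(r,i)$. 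Otherwise, (3) gives $w(i,u)\ge w(r,i)$.
\end{itemize}
So the cost of $T$ is at least (cost of a spanning tree of $\ComG[S\cup\{r\}]$) $+\,w(r,i)$, which is $\ge c(S)+w(r,i)$.

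Some care is needed to confirm the replacements produce a tree: each non-root component receives exactly one new edge to $r$, so it does. The case $S=\emptyset$ is trivial.

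I expect this exchange argument to be the only real obstacle. An alternative that avoids the tree surgery is Kruskal/cut reasoning: under (3), the edge $\{r,i\}$ is a lightest edge in the cut $(\{i\},S\cup\{r\})$, and every other edge at $i$ is at least as heavy as its endpoint's root edge. The explicit exchange above seems cleanest, though.
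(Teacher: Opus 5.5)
Your proof is correct and the easy parts match the paper, but your main step $(3)\Rightarrow(1)$ takes a genuinely different route.

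The paper argues by contraposition. If $v(S\cup\{i\})>v(S)$, then $T_S\cup\{\{r,i\}\}$ is not a minimum spanning tree of $\ComG[S\cup\{r,i\}]$. By the path-optimality criterion, some non-tree edge $e$ is then lighter than an edge $f$ on its tree path. Optimality of $T_S$ forces $e=\{i,j\}$ with $j\in S$. If this $j$ satisfied (3), then $f$ would lie on the $r$--$j$ path of $T_S$, and $T_S\setminus\{f\}\cup\{\{r,j\}\}$ would be cheaper than $T_S$.

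You instead start from an arbitrary minimum spanning tree $T$ of $\ComG[S\cup\{r,i\}]$. You transform it directly into a spanning tree of $\ComG[S\cup\{r\}]$ plus one edge at $i$. The inequality $w(r,u)\le w(i,u)$ lets you rehook each non-root component of $T-i$, and $w(r,i)\le w(i,u)$ bounds the surviving edge.

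Your argument needs only that $c(S)$ is a minimum over spanning trees, not the exchange characterization of minimum spanning trees. It also shows exactly where each of the two inequalities in (3) enters. The paper's argument reuses the same path-exchange property it later exploits in Section~\ref{section:MonteCarlo} for the incremental tree update. It also produces the violating $j$ as the endpoint of an explicit improving swap.

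Both proofs invoke (3) only for $j\in S$. So both actually give the finer statement that $v(S\cup\{i\})=v(S)$ whenever (3) holds for all $j\in S$.

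Finally, your closed-form evaluation of $v(\{i,j\})$ settles $(2)\Leftrightarrow(3)$ at once. The paper proves only $(2)\Rightarrow(3)$, by contraposition with the same triangle computation, and gets the converse by going around the cycle of implications.
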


\begin{proof} 
(0)$\iff$(1): See Lemma~\ref{Lemma:Null=Shap0}. 

\noindent 
(1)$\implies$(2):
This follows immediately from the case $|S|=1$.

\noindent 
(2)$\implies$(3):
We prove the contrapositive. 
Suppose that there exists
    $j \in N \setminus \{i\}$ satisfying that 
    [$w(r,i)> w(i,j)$ or $w(i,j) < w(r,j)$].
This condition is equivalent to
     $w(i,j) < \max \{w(r,i), w(r,j)\}$.
It then follows that
\begin{eqnarray*}
c(\{i\})+c(\{j\}) 
    &=& w(r,i)+w(r,j) 
    =\max \{w(r,i),w(r,j)\} + \min \{w(r,i),w(r,j)\} \\
    &>& w(i,j) + \min \{w(r,i),w(r,j)\} 
    = \min \{w(i,j)+w(r,i),w(i,j)+w(r,j)\} \\
    &\geq & c(\{i,j\}).
\end{eqnarray*}
Consequently, we obtain 
$
    v(\{j\})=0<   c(\{i\})+c(\{j\}) -   c(\{i,j\}) =v(\{i,j\}).
$

\smallskip 

\noindent
(3)$\implies$(1):
We show that
    if the player $i \in N$ is not a null player, 
    then $\exists j \in N \setminus \{i\}$ satisfying 
    [$w(r,i)> w(i,j)$ or $w(i,j) < w(r,j)$].
As $i$ is not a null player, 
    there exists a subset  
    $S \subseteq N \setminus \{i\}$ satisfying  
    $v(S)\neq v(S \cup \{i\})$.
Let $T_S$ be the set of edges of a minimum spanning tree 
    of $\ComG [S \cup \{r\}]$ 
    with respect to the given edge-weight $\Vec{w}$.    
The assumption and monotonicity of $v$ implies that
\begin{eqnarray*}
    0&<& v(S \cup \{i\})- v(S) =( w(r,i)+c(S)) -c(S \cup \{i\})
    =w(r,i) + \sum_{e \in T_s} w(e) -c(S \cup \{i\})
\end{eqnarray*}
    and thus the spanning tree $T_S \cup \{\{r,i\}\}$ 
    of $\ComG [S \cup \{r,i\}]$
    is not the minimum spanning tree.
The above non-optimality implies that
    there exists an edge $e \not \in T_S \cup \{\{r,i\}\}$
    and an edge $f$ such that 
    (1) $f$ lies on the unique path $P$ 
        in  $\ComG [S \cup \{r,i\}]$ 
        connecting the end vertices of $e$ and
    (2)  $w(e) < w(f)$.
If $e$ is an edge connecting vertices in $S \cup \{r\}$, 
    then the unique path $P$ is contained 
    in  $\ComG [S \cup \{r\}]$ and 
    $T_s \setminus \{f\} \cup \{e\}$ is a spanning tree 
    of  $\ComG [S \cup \{r\}]$, 
    whose weight is $c(S)-w(f)+w(e) < c(S)$,  
    which contradicts the optimality of $T_S$.
Thus, $e$ is incident to $i$.
The property $e \not \in T_S \cup \{\{r,i\}\}$
    implies that $e \neq \{r,i\}$.
In what follows, we write $e = \{i, j\}$, 
    thereby defining the vertex $j \in S$.
We proceed to show, 
    by the way of contradiction, 
    that $j (\in S)$ satisfies either 
    $w(r,i)> w(i,j)$ or $w(i,j) < w(r,j)$.
    
Assume on the contrary that 
    $w(r,i) \leq  w(i,j)=w(e) \geq w(r,j)$.
The inequalities $w(r,i) \leq w(i,j) = w(e) < w(f)$ implies 
    that $\{r,i\} \neq f$. 
Since $i$ is a leaf vertex of the tree 
    defined by $T_S \cup \{\{r,i\}\}$, 
    the unique path $P$ connecting $i$ and $j$ 
    includes the edge $\{r,i\}$ as the end edge.
As $f \neq \{r,i\}$, 
    the edge $f$ lies on the subpath $P'$ of $P$
    obtained by deleting the edge $\{r,i\}$ from $P$.
Clearly, the subpath $P'$ connects vertices $r$ and  $j.$
The inequalities $w(f) > w(e)=w(i,j) \geq w(r,j)$ imply that 
    $\{r,j\} \neq f \in P'$.
As $P'$ connects 
    end vertices of $\{r, j\}$, 
    $P'$ does not include the edge $\{r, j\}$.
Thus, the edge set $T_S \setminus \{f\} \cup \{r,j\}$ 
    is a spanning tree of $\ComG [S \cup \{r\}]$,
    whose weight $c(S)-w(f)+w(r,j)\leq c(S)-w(f)+w(e) < c(S)$, 
    which contradicts the optimality of $T_S$.
\end{proof}

By employing Theorem~\ref{Theorem:Null-Player}, 
    we can determine whether a given player 
    in an \mbox{MCST-saving} game
    is a null player in $O(n)$ time.
When player $i$ is a null player, 
    Lemma~\ref{Lemma:Null=Shap0} implies that 
    his/her Shapley value is equal to zero. 
The Shapley values of remaining players are obtained 
    by considering the MCST-saving game 
    derived from deleting the vertex corresponding 
    to the null player.
Thus, by applying the above procedure repeatedly, 
    we obtain an MCST-saving game in which  
    every  remaining player is non-null, within $O(n^2)$ time.

When restricted to simple MCST-saving games, 
    the above theorem takes the following form, 
    which provides a lower bound 
    for the Shapley value of a non‑null player
    in the simple MCST-saving game.

\begin{Cor} \label{Cor:ShapleyLB}
Let $(N,v)$ be a simple MCST-saving game 
    defined by the 0-1 weighted complete graph
  $\langle \ComG [N_r], \Vec{\widetilde{w}} \rangle$.
Let $\phi$ denote the corresponding Shapley value.
For any player $i \in N$, 
    the following statements are equivalent:
\begin{description}
\item[\rm (0)] $\phi_i \neq 0,$
\item[\rm (1)] player $i$ is ``not'' a null player,
\item[\rm (2)]  $\exists j \in N\setminus \{i\},$  
 $(\widetilde{w}(r,i),\widetilde{w}(i,j),\widetilde{w}(r,j))
    \in \{(1,0,0), (1,0,1), (0,0,1)\}$,
\item[\rm (3)] $\displaystyle \phi_i \geq \frac{1}{n(n-1)}$. 
\end{description}
\end{Cor}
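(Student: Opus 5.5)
The plan is to obtain (0)$\iff$(1)$\iff$(2) directly from Theorem~\ref{Theorem:Null-Player}, to note that (3)$\implies$(0) is trivial, and to spend the real work on (2)$\implies$(3).

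For the first part, since (0) and (1) of the corollary are the negations of (0) and (1) of Theorem~\ref{Theorem:Null-Player}, they are equivalent. By that theorem, $i$ is not a null player if and only if some $j\neq i$ violates $\widetilde{w}(r,i)\le \widetilde{w}(i,j)\ge \widetilde{w}(r,j)$. Equivalently, $\widetilde{w}(i,j)<\max\{\widetilde{w}(r,i),\widetilde{w}(r,j)\}$. With 0-1 weights this forces $\widetilde{w}(i,j)=0$ and $\max\{\widetilde{w}(r,i),\widetilde{w}(r,j)\}=1$. The triples satisfying this are exactly $(1,0,0),(1,0,1),(0,0,1)$, which is statement (2). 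Finally, (3)$\implies$(0) holds because $\frac{1}{n(n-1)}>0$.

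For (2)$\implies$(3), fix $j$ as in (2). By monotonicity of $v$, every marginal contribution $\Merg(\pi,i)$ is nonnegative. I will therefore bound $\phi_i$ from below using only the permutations in which $j$ is in the first position and $i$ is in the second. There are $(n-2)!$ such permutations, so their share is $(n-2)!/n! = \frac{1}{n(n-1)}$. For each of them,
\[
\Merg(\pi,i)=v(\{i,j\})-v(\{j\})=v(\{i,j\}).
\]
It then remains to show $v(\{i,j\})\ge 1$. We have
\[
v(\{i,j\})=\widetilde{w}(r,i)+\widetilde{w}(r,j)-c(\{i,j\}),
\qquad
c(\{i,j\})\le \widetilde{w}(i,j)+\min\{\widetilde{w}(r,i),\widetilde{w}(r,j)\}.
\]
Because $\widetilde{w}(i,j)=0$, this gives $v(\{i,j\})\ge \max\{\widetilde{w}(r,i),\widetilde{w}(r,j)\}=1$. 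This is the same computation as in the proof of (2)$\implies$(3) of Theorem~\ref{Theorem:Null-Player}, now with integrality added. Alternatively, one can simply check the three triples: each gives $c(\{i,j\})\le 1$, with $\widetilde{w}(r,i)+\widetilde{w}(r,j)$ equal to $1$, $2$, and $1$ respectively. Hence
\[
\phi_i\ge \frac{(n-2)!}{n!}\cdot 1=\frac{1}{n(n-1)}.
\]

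The only real content is the counting step together with $v(\{i,j\})\ge 1$. Integrality of the weights is what upgrades the strict inequality $v(\{i,j\})>0$ to $v(\{i,j\})\ge 1$, so this is the step to state carefully. The rest is immediate.
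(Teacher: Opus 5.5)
Your proposal is correct and follows the paper's proof essentially verbatim. The equivalences (0)--(2) come from Theorem~\ref{Theorem:Null-Player}, and for (2)$\implies$(3) you use monotonicity to restrict the Shapley sum to the $(n-2)!$ permutations with $j$ first and $i$ second, each contributing $v(\{i,j\})-v(\{j\})\ge 1$; your explicit handling of (3)$\implies$(0) and of the 0-1 translation of condition (3) of the theorem spells out what the paper leaves as ``follows directly.''
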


\begin{proof}
We prove the implication \((2)\Rightarrow(3)\).  
The remaining cases follow directly 
    from Theorem~\ref{Theorem:Null-Player}.

Let $j\in N\setminus \{i\}$ be a player 
   such that 
    $(\widetilde{w}(0,i),\widetilde{w}(i,j),\widetilde{w}(0,j))
    \in \{(1,0,0), (1,0,1), (0,0,1)\}$.
In each of these cases, the game \((N,v)\) satisfies
    $v(\{i\})=v(\{j\})=0$ and $v(\{i,j\})=1$.
Define the set of permutations 
   \mbox{$\Pi_N'=\{\pi \in \Pi_N \mid \pi(1)=j, \pi(2)=i\}$.}
Then, the Shapley value $\phi_i$ 
    satisfies 
\begin{eqnarray*}
    \phi_i&=&\frac{1}{n!}  
        \sum_{\pi \in \Pi_N} \Big( v(\Prec(\pi, i)\cup \{i\}) - v(\Prec(\pi, i) ) \Big) \\
    &\geq& \frac{1}{n!}
            \sum_{\pi \in \Pi_N'} \Big(v(\{i,j\}) - v(\{j\}) \Big)   
    = \frac{\sum_{\pi \in \Pi_N'} (1-0)  
        }{n!}
    = \frac{(n-2)!}{n!}
    = \frac{1}{n(n-1)},
\end{eqnarray*}
where the above inequality follows from 
    the monotonicity of $(N, v)$.
\end{proof}


\section{Monte Carlo Method 
}\label{section:MonteCarlo}

In this section, we introduce a Monte Carlo method 
    for computing the Shapley value of MCST-saving games,
    which constitutes 
    a fully polynomial-time randomized approximation scheme
    (FPRAS). 
We now describe a Monte Carlo method, 
    which corresponds to the most simple form 
    of the algorithms 
    for calculating the Shapley-Shubik index 
    proposed by Mann and Shapley in their seminal work~\cite{Mann1960RM2651}.
For an overview of Monte Carlo methods 
    in Shapley value (and/or Shapley-Shubik index)
    computation, 
    see~\cite{bachrach2010approximating,
    Liben-Nowell2012Computing,
    ushioda2022monte} for example.

As shown in the previous section, 
    we can determine the set of all the non-null players within $O(n^2)$ time.
Throughout this section, we assume the following.
\begin{Assumption} \label{Assumption:non-null}
    Every player in the given MCST-saving game is non-null, 
    and hence has a positive Shapley value. 
\end{Assumption}

\begin{algorithm}
\caption{}\label{algo1}
\begin{algorithmic}[Algorithm 1]
\Input  
     non-negatively weighted complete graph $\langle \ComG [N_r], \Vec{w}\rangle$
     and a positive integer $M$.
\Output Approximate Shapley value $\phi^A_i$ for the MSCT-saving game $(N,v)$ 
    defined by the non-negatively weighted complete graph  $\langle \ComG [N_r], \Vec{w}\rangle$.
\smallskip
\State Set $m\gets0$ and $\phi'_i\gets0
  $ $(\forall i \in N)$.
\While{$m < M$}
\State  Choose $\pi \in \Pi_N$ randomly.
\State \label{step1.2} Update $\phi'_i \gets \phi'_i + \Merg (\pi ,i )$ $(\forall i \in N)$ and $m \gets m + 1$.
\EndWhile
\State Output $\phi^A_i = \frac{1}{M}\phi'_i$ for each $i \in N$.
\end{algorithmic}
\end{algorithm}

Algorithm~\ref{algo1} denotes a simple Monte Carlo method.
We denote the vector (of random variables) 
	obtained by executing Algorithm~\ref{algo1}
	by 	
	$(\phi^A_1, \phi^A_2,\ldots ,
		\phi^A_n)$.
The following result is immediate.

\begin{Lemma} \label{Lemma:Unbiasedness}
    For each player $i \in N$, 
	$\Exp \left[ \phi^A_i \right]=\phi_i$.
\end{Lemma}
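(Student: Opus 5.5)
The plan is to apply linearity of expectation to the sum that Algorithm~\ref{algo1} accumulates. First I will introduce notation for the sampled permutations. Let $\pi^{(1)},\dots,\pi^{(M)}$ be the permutations drawn in the $M$ iterations. The phrase ``choose $\pi\in\Pi_N$ randomly'' will be read as choosing $\pi$ uniformly at random from $\Pi_N$, so that $\Prob[\pi^{(m)}=\pi]=1/n!$ for every $\pi\in\Pi_N$ and every $m$.

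Each iteration adds exactly $\Merg(\pi^{(m)},i)$ to $\phi'_i$, and $\phi'_i$ starts at zero. Hence the output is
\[
  \phi^A_i=\frac{1}{M}\sum_{m=1}^{M}\Merg(\pi^{(m)},i).
\]
For a single sample, I will compute the expectation directly from the uniform distribution:
\[
  \Exp\bigl[\Merg(\pi^{(m)},i)\bigr]=\sum_{\pi\in\Pi_N}\frac{1}{n!}\Merg(\pi,i)=\phi_i ,
\]
which is exactly the definition of the Shapley value given in Section~\ref{section:pre}. Linearity of expectation then gives $\Exp[\phi^A_i]=\frac{1}{M}\cdot M\phi_i=\phi_i$.

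None of these steps is a real obstacle. The only point to state explicitly is that ``randomly'' means uniformly. Independence of the samples is not needed for unbiasedness; it only becomes relevant for the concentration bounds used later for the FPRAS. Assumption~\ref{Assumption:non-null} is also not needed here.
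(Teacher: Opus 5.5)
Your proposal is correct and is essentially the paper's argument. The paper also writes $\phi^A_i$ as the average of the per-iteration marginal contributions $\Delta^{(m)}_i=\Merg(\pi^{(m)},i)$, each with expectation $\phi_i$ under uniform sampling, and your remark that independence is not needed here (the paper mentions i.i.d.\ but uses only linearity) is accurate.
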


\begin{proof}
    For each $m \in \{1, 2, \ldots, M\}$ and $i \in N$, 
    define the random variable 
    $\Delta^{(m)}_i = \Merg(\pi, i)$, 
    where $\pi \in \Pi_N$ is 
    the permutation sampled 
    in the $m$-th iteration of Algorithm~\ref{algo1}.
As Algorithm~\ref{algo1} chooses a permutation $\pi \in \Pi_N$ randomly, 
    it is obvious that for each player $i \in N$, 
	$\{ \Delta^{(1)}_i, \Delta^{(2)}_i, \ldots , \Delta^{(M)}_i\}$ 
    is an i.i.d. sequence satisfying
	$\phi^A_i=\sum_{m=1}^M \Delta^{(m)}_i/M$ and 
 	$ \Exp \left[ \phi^A_i \right]
	=\Exp \left[ \Delta^{(m)}_i \right]=\phi_i$ 
	$(\forall m \in \{1,2,\ldots ,M\})$.
\end{proof}

Next, we show that 
    the time complexity of Algorithm~\ref{algo1} is $O(Mn^2)$.
In the following, we discuss a problem of calculating 
    $(c(\{\pi(1)\}),c(\{\pi(1), \pi(2)\}),\ldots , c(N))$
    for a given permutation $\pi \in \Pi_N$.
The \mbox{0-1} weighted complete graph case 
    is discussed in~\cite{ando2020monte,tokutake2013approximaiton}. 
For each player $i \in N$, 
    it holds that $c(\{i\})=w(r,i)$.
Let  $(S, i)$ be a pair consisting of
    a non-empty coalition $S$ and 
    a player $i \in N \setminus S$.
Suppose that we already know the value $c(S)$ 
    and the set of edges, denoted by $T_S$, 
    forming a minimum spanning tree 
    of $\ComG [S \cup \{r\}]$ 
    with respect to the given edge-weight $\Vec{w}$.   
We now consider the problem of computing the value 
    $c(S \cup \{i\})$ and the set of edges 
    corresponding to a minimum spanning tree 
    of $\ComG[S \cup \{r,i\}]$ 
    that realizes this value. 
Let $\overline{T_S}$ denote the set of edges 
    of $\ComG[S \cup \{r\}]$ that are not contained in $T_S$.
It is well known that for any edge $e\in \overline{T_S}$,  
    the unique path in $T_S$ connecting the endpoints of $e$, 
    denoted by $P_e$, 
    satisfies $w(e) \ge w(f)$ for every $f \in P_e$. 
We now show that there exists a minimum spanning tree
    of $\ComG[S \cup \{r,i\}]$, which, in particular, 
    contains no edge from $\overline{T_S}$. 
Let $T^+$ denote a minimum spanning tree
    of $\ComG[S \cup \{r,i\}]$, 
    and assume that 
    there exists an edge 
    $e \in T^+ \cap \overline{T_S}$. 
Then there exists an edge $f \in P_e$ such that
    $T^+ \setminus \{e\} \cup \{f\}$
    is also a spanning tree of $\ComG[S \cup \{r,i\}].$
The minimality of $T^+$ together with
    $w(e)\geq w(f)$ implies that 
     $T^+ \setminus \{e\} \cup \{f\}$
     is likewise a minimum spanning tree 
     of $\ComG[S \cup \{r,i\}]$.
The above exchange operation decreases 
    the number of edges in
    $T^+ \cap \overline{T_S}$ by one. 
By repeatedly applying this operation, 
    we eventually obtain a minimum spanning tree 
    of $\ComG[S \cup \{r,i\}]$, 
    which is edge‑disjoint from $\overline{T_S}$.
Therefore, we only need to find a minimum spanning tree 
    in the undirected graph $G'$ 
    with vertex set $S \cup \{r,i\}$ 
    and edge set 
\[
\left( \begin{array}{c} 
    S \cup \{r,i\} \\ 2
        \end{array}
\right) \setminus \overline{T_S}
= T_S \cup \{\{i,j\} \mid j \in S \cup \{r\}\}
\]
    with respect to the given edge weight $\Vec{w}$. 
As $T_S$ is a spanning tree of $\ComG[S \cup \{r\}]$
    and $\{\{i,j\} \mid j \in S \cup \{r\}\}$ 
    is a star tree of $\ComG[S \cup \{r,i\}]$, 
    it follows immediately that $G'$ is planar.
(Since every cycle in $G'$ passes the mutual vertex $i$
    and thus
    the well-known Kuratowski’s theorem~\cite{kuratowski1930} 
    says that $G'$ is planar.)
We can find a minimum spanning tree of planar graph 
    in linear time~\cite{cheriton1976,matsui1995minimum}.
Thus, for any permutation $\pi \in \Pi_N$, 
    we can calculate the vector 
    $(c(\{\pi(1)\}),c(\{\pi(1), \pi(2)\}),\ldots , c(N))$
    in $O(n^2)$ time.
Hence, the total time complexity of Algorithm~\ref{algo1} 
    is bounded by $O(Mn^2)$.


\subsection{0-1 Weighted Complete Graph}

In this subsection, we assume that 
    0-1 weighted complete graph $\langle \ComG [N_r], \widetilde{w} \rangle$ is given.
We discuss the simple MCST-saving game $(N,v)$ corresponding to the simple MCST game $(N,c)$
    defined by $\langle \ComG [N_r], \widetilde{w} \rangle$.
    
\begin{Lemma} \label{Lemma:LB-UB}
For any permutation $\pi \in \Pi_N$ and a player $i \in N$, 
    $\Merg (\pi,i)\in [0,n-1]$.
\end{Lemma}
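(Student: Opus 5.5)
Let $S=\Prec(\pi,i)$. Then $\Merg(\pi,i)=v(S\cup\{i\})-v(S)$, and the plan is to bound this from below by monotonicity and from above by an explicit spanning tree.

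\emph{Lower bound.} Since $S\subseteq S\cup\{i\}$, the monotonicity of the MCST-saving game established in Section~\ref{section:pre} gives $\Merg(\pi,i)\ge 0$ immediately.

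\emph{Upper bound.} Write the marginal contribution in terms of costs. If $S=\emptyset$, then $\Merg(\pi,i)=v(\{i\})-v(\emptyset)=0$. If $S\neq\emptyset$, then
\[
\Merg(\pi,i)=\widetilde{w}(r,i)+c(S)-c(S\cup\{i\}).
\]
Here $\widetilde{w}(r,i)\le 1$ and $c(S\cup\{i\})\ge 0$, since all weights are nonnegative. Also $c(S)\le |S|$, because the star $\{\{r,j\}\mid j\in S\}$ is a spanning tree of $\ComG[S\cup\{r\}]$ with $|S|$ edges, each of weight at most $1$. This yields $\Merg(\pi,i)\le |S|+1$, which overshoots the target $n-1$ by up to $2$, since $|S|\le n-1$. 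So the crude bound needs refinement.

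The refinement is to lower-bound $c(S\cup\{i\})$ by comparison with $c(S)$. Take a minimum spanning tree $T^+$ of $\ComG[S\cup\{r,i\}]$, with $\deg_{T^+}(i)=d\ge 1$. Deleting $i$ splits $T^+$ into $d$ components. Reconnect them by adding $d-1$ edges of $\ComG[S\cup\{r\}]$, each of weight at most $1$. This gives a spanning tree of $\ComG[S\cup\{r\}]$ of cost at most $c(S\cup\{i\})-(\text{weight of edges at } i)+(d-1)$. Hence
\[
c(S)\le c(S\cup\{i\})+d-1 .
\]
Therefore $\Merg(\pi,i)\le \widetilde{w}(r,i)+d-1\le d$. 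Moreover $d\le |S|+1\le n$, because $i$ has at most $|S|+1$ neighbours in $S\cup\{r\}$.

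The main obstacle is shaving this bound down to exactly $n-1$. I would split into cases.
\begin{itemize}
\item If $|S|\le n-2$, then $d\le |S|+1\le n-1$ and we are done.
\item If $|S|=n-1$ and $d\le n-1$, we are again done.
\item If $|S|=n-1$ and $d=n$, then $i$ is adjacent in $T^+$ to $r$ and to all of $S$. The bound $\Merg(\pi,i)\le \widetilde{w}(r,i)+d-1$ equals $n$ only when $\widetilde{w}(r,i)=1$ and all $d$ edges at $i$ have weight $0$. But then $\{r,i\}$ is an edge of weight $1$ in the minimum spanning tree $T^+$, contradicting the fact that all edges at $i$ have weight $0$. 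So this case is impossible.
\end{itemize}
To make the last step clean, I would redo the estimate while keeping the weights of the edges at $i$. The bound becomes $\Merg(\pi,i)\le \widetilde{w}(r,i)+(d-1)-\sum_{e\ni i,\,e\in T^+}\widetilde{w}(e)$. When $\{r,i\}\in T^+$, the term $\widetilde{w}(r,i)$ cancels against the same edge in the sum, giving $\Merg(\pi,i)\le d-1\le n-1$. When $\{r,i\}\notin T^+$, we have $d\le |S|\le n-1$, so $\Merg(\pi,i)\le 1+d-1=d\le n-1$. Either way $\Merg(\pi,i)\le n-1$.
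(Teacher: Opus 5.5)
Your proof is correct, but it takes a different route from the paper's.

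\textbf{The paper's route.} The paper never looks inside a minimum spanning tree. It uses only the global bound $c(\Prec(\pi,i))\le n-1$ and the integrality of costs under 0-1 weights, and splits on the value of $c(\Prec(\pi,i)\cup\{i\})$.
\begin{itemize}
\item If $c(\Prec(\pi,i)\cup\{i\})\ge 1$, then $\Merg(\pi,i)\le 1-1+(n-1)$.
\item If $c(\Prec(\pi,i)\cup\{i\})=0$, the only delicate subcase is $c(\Prec(\pi,i))=n-1$. There $\Prec(\pi,i)\cup\{r\}$ spans no zero-weight edge, so a zero-cost tree after adding $i$ forces $\widetilde{w}(r,i)=0$. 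The paper asserts this last step without justification.
\end{itemize}

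\textbf{Your route.} Your tree-surgery argument bounds the cost drop $c(S)-c(S\cup\{i\})$ locally, by the degree $d$ of $i$ in a minimum spanning tree $T^+$ of $\ComG[S\cup\{r,i\}]$. This gives
\[
\Merg(\pi,i)\le \widetilde{w}(r,i)+(d-1)-\sum_{e\ni i,\,e\in T^+}\widetilde{w}(e).
\]
The two cases $\{r,i\}\in T^+$ (giving $d-1\le |S|$) and $\{r,i\}\notin T^+$ (giving $d\le |S|$) then finish the proof without any appeal to integrality or degenerate configurations.

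\textbf{Comparison.} The paper's argument is shorter. Yours is fully self-contained and yields a finer, degree-based bound. It also carries over verbatim to arbitrary nonnegative weights: replacing the weight $1$ of each reconnecting edge by $W=\max_e w(e)$ gives $\Merg(\pi,i)\le (n-1)W$.

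\textbf{Presentation.} Your bulleted case analysis is superseded by the final refined estimate, and it is loosely worded. The unrefined bound $\widetilde{w}(r,i)+d-1$ equals $n$ whenever $\widetilde{w}(r,i)=1$ and $d=n$, regardless of the weights of the edges at $i$. Drop the bullets and keep only the last paragraph.
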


\begin{proof}
The monotonicity of $v$ implies that $\Merg (\pi ,i ) \geq 0$.
As the given graph is 0-1 weighted,  $c(\Prec(\pi, i)) \leq n - 1$.
In case that $c(\Prec(\pi, i)\cup \{i\}) \ge 1$, we have
\[
\Merg (\pi ,i ) = \widetilde{w}(r,i)-c(\Prec(\pi, i)\cup \{i\})+c(\Prec(\pi, i))
    \leq 1 - 1 + (n - 1) = n - 1.
\]
If $c(\Prec(\pi, i)) = n - 1$ and $c(\Prec(\pi, i)\cup \{i\}) = 0$, 
    then it follows that the edge weight $\widetilde{w}(r,i) = 0$.
Therefore, we always have $\Merg (\pi ,i ) \le n - 1$.
\end{proof}

\begin{Theorem}\label{Theorem:FPRAS-01}
Let $\phi^A=(\phi^A_1, \phi^A_2,\ldots , \phi^A_n)$
    denote the vector obtained by applying 
    Algorithm~\ref{algo1}
	to a simple MCST-saving game $(N,v)$. 
Then, for any $\varepsilon >0$ and  $0< \delta <1$,
	we have the following.

\noindent 
{\rm (1)}
If we set 
	$\displaystyle M \geq \frac{n^2 (n-1)^4 \ln (2/\delta)}{2\varepsilon^2}$,
	then each player $i \in N$ satisfies that
\[
\Prob 
	\left[ 
		\frac{| \phi^A_i-\phi_i| }{\phi_i}
            < \varepsilon  
	\right] \geq 1-\delta.
\]

\noindent 
{\rm (2)}
 If we set 	
	$\displaystyle M\geq \frac{n^2 (n-1)^4 \ln (2n/\delta)}{2\varepsilon^2}$,
	then \
\[
 \Prob 
	\left[
		\forall i \in N, 
             \frac{| \phi^A_i-\phi_i| }{\phi_i}
            < \varepsilon  
	\right] \geq 1-\delta.
\]
\end{Theorem}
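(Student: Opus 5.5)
The plan is to apply Hoeffding's inequality to the i.i.d. samples $\Delta^{(1)}_i,\ldots,\Delta^{(M)}_i$ introduced in the proof of Lemma~\ref{Lemma:Unbiasedness}, and then convert the resulting additive bound into a relative one using the lower bound of Corollary~\ref{Cor:ShapleyLB}.

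First, fix a player $i \in N$. By Lemma~\ref{Lemma:Unbiasedness}, $\phi^A_i=\frac{1}{M}\sum_{m=1}^M \Delta^{(m)}_i$ is an average of i.i.d. random variables with mean $\phi_i$. By Lemma~\ref{Lemma:LB-UB}, each of them takes values in $[0,n-1]$. Hoeffding's inequality then gives, for any $t>0$,
\[
\Prob\left[\,|\phi^A_i-\phi_i|\ge t\,\right]\le 2\exp\!\left(-\frac{2Mt^2}{(n-1)^2}\right).
\]

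Second, by Assumption~\ref{Assumption:non-null} every player is non-null. The implication (1)$\Rightarrow$(3) of Corollary~\ref{Cor:ShapleyLB} therefore yields $\phi_i\ge 1/(n(n-1))$. Take $t=\varepsilon\phi_i$. Since $t\ge \varepsilon/(n(n-1))$, we get
\[
\Prob\left[\frac{|\phi^A_i-\phi_i|}{\phi_i}\ge\varepsilon\right]\le 2\exp\!\left(-\frac{2M\varepsilon^2}{n^2(n-1)^4}\right).
\]
This quantity is at most $\delta$ exactly when $M\ge n^2(n-1)^4\ln(2/\delta)/(2\varepsilon^2)$, which proves (1). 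The complementary event has probability at least $1-\delta$; strictness of the inequality "$<\varepsilon$" is fine because Hoeffding bounds the event "$\ge t$".

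For (2), apply the same bound with $\delta$ replaced by $\delta/n$. Under $M\ge n^2(n-1)^4\ln(2n/\delta)/(2\varepsilon^2)$, each player fails with probability at most $\delta/n$. A union bound over the $n$ players then gives a total failure probability of at most $\delta$. Note that the samples for different players share the same permutations and so are dependent, but the union bound needs no independence across players.

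The only delicate point is ensuring the lower bound $1/(n(n-1))$ applies to every player. This is exactly why Assumption~\ref{Assumption:non-null} is needed: null players are removed beforehand using Theorem~\ref{Theorem:Null-Player}. The rest is routine.
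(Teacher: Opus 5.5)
Your proposal is correct and is essentially the paper's proof. It uses Hoeffding's inequality with the range $[0,n-1]$ from Lemma~\ref{Lemma:LB-UB}, and then the lower bound $\phi_i\ge 1/(n(n-1))$ from Corollary~\ref{Cor:ShapleyLB} under Assumption~\ref{Assumption:non-null} to replace the event $|\phi^A_i-\phi_i|\ge\varepsilon\phi_i$ by $|\phi^A_i-\phi_i|\ge\varepsilon/(n(n-1))$, followed by a union bound over the $n$ players with $\delta/n$ for part (2), exactly as the paper does.
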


\begin{proof}
By Hoeffding's inequality~\cite{hoeffding1963probability} 
    and Lemma~\ref{Lemma:LB-UB},   
	each player $i \in N$ satisfies
\begin{eqnarray*}
\Prob \left[ \left| \phi^A_i
	- \Exp \left[ \phi^A_i \right] \right|
	\geq t \right]
&\leq 
& 2 \exp \left(
	- \frac{2M^2 t^2}{\sum_{m=1}^M ({(n-1)}-0)^2}
	\right)
= 2 \exp \left( \frac{-2M t^2}{(n-1)^2} \right).
\end{eqnarray*}

As the graph has $0$–$1$ weighted edges, 
    Assumption~\ref{Assumption:non-null} together with 
    Corollary~\ref{Cor:ShapleyLB} ensures that 
    $\phi_i \geq \frac{1}{n(n-1)}$ for all $i\in N$,
    so the denominator of
    $\frac{\lvert \phi_i^A - \phi_i \rvert}{\phi_i}$
    is well defined.

\noindent
{\rm (1)} If we set 
	$\displaystyle M \geq \frac{n^2(n-1)^4 \ln (2/\delta)}{2\varepsilon^2}$,
	then 
\[
\begin{array}{l}
\displaystyle 
\Prob 
	\left[ 
		\frac{\left| \phi^A_i-\phi_i \right| }{\phi_i}
            < \varepsilon  
	\right] 
=1- 
\Prob 
	\left[ 
		\left| \phi^A_i-\Exp \left[ \phi^A_i \right] \right| 
            \geq \varepsilon  \phi_i
	\right] 
\geq 1- \Prob 
	\left[ 
		\left| \phi^A_i-\Exp \left[ \phi^A_i \right] \right| 
            \geq  \left( \frac{{\varepsilon }}{n(n-1)} \right) 
	\right] \\
\displaystyle 
\geq 1- 2 \exp 
    \left( 
        \frac{-2M \left(\frac{{\varepsilon}}{n(n-1)}\right)^2
        }{(n-1)^2} 
    \right)
=    1- 2 \exp \left( -M \frac{2  \varepsilon^2}{n^2(n-1)^4} \right) \\
\geq 1-  2 \exp \left( - \ln \left( \frac{2}{\delta} \right) \right) 
 = 1- \delta.
 \end{array}
\] 

\noindent
{\rm (2)} If we set 
	$\displaystyle M\geq \frac{n^2(n-1)^4 \ln (2n/\delta)}{2\varepsilon^2}$,
	then we have that
\[
\begin{array}{l}
\displaystyle 
\Prob \left[
				\forall i \in N, 
					\frac{\left| 
						\phi^A_i	- \phi_i 
					\right|}{\phi_i}
					< \varepsilon 
			\right]
 = 1- \Prob 
			\left[
				\exists i \in N, 
					\left|
                        \phi^A_i-\Exp \left[ \phi^A_i \right] 
                    \right| 
					\geq \varepsilon \phi_i
			\right] \\
\displaystyle
 \geq 
 1-\sum_{i\in N} 
	\Prob 
			\left[
					\left|
                        \phi^A_i-\Exp \left[ \phi^A_i \right] 
                    \right| 
					\geq \varepsilon \phi_i
			\right]
\geq 
1-\sum_{i \in N} 
	\Prob 
			\left[
					\left|
                        \phi^A_i-\Exp \left[ \phi^A_i \right] 
                    \right| 
					\geq  \left( \frac{\varepsilon}{n(n-1)} \right)
			\right] \\
\displaystyle 
\geq 1- \sum_{i\in N}  2 \exp 
        \left( 
            \frac{-2M 
                \left( 
                    \frac{{\varepsilon}}{n(n-1)}
                \right)^2 
            }{{(n-1)}^2}
        \right)
\geq 1- \sum_{i \in N}  2 \exp 
        \left( -M 
            \frac{2 \varepsilon^2 }{n^2(n-1)^4}
        \right) \\
\displaystyle
 \geq 
 1- \sum_{i \in N}  2 \exp 
	\left( 
		- \ln \left( \frac{2n}{\delta} \right) 
 	\right) 
= 1-\sum_{i \in N} \frac{\delta}{n}=1-\delta.  
\end{array}
\]
\end{proof}

\subsection{Non-negatively Weighted Complete Graph}

In this subsection, 
  a non-negatively weighted complete graph 
  $\langle \ComG [N_r], w \rangle$ 
  is assumed to be given.
We consider the MCST-saving game $(N,v)$ 
    corresponding to the MCST game $(N,c)$ 
    defined by $\langle \ComG [N_r], w \rangle$.

Consider a set of mutually distinct positive edge-weights
\[
    \{\gamma_1, \gamma_2, \ldots , \gamma_H\}
    =\{\gamma \in \mathbb{R} 
    \mid \exists \{i,j\}\subseteq N_r, \gamma=w(i,j)>0 \}
\]
    such that  $0< \gamma_1 < \cdots < \gamma_H$.
Obviously we have $H\leq n(n+1)/2.$
Define $\gamma_0 =0$.
For each $h\in \{1,2,\ldots ,H\},$
    we introduce a 0-1 edge-weight
\[
    w^{[h]} (i,j)=\left\{ 
        \begin{array}{ll}
            1 & (\mbox{if } \gamma_h \leq w(i,j)), \\
            0 & (\mbox{otherwise}),
        \end{array}
        \right.
    \quad
    \mbox{for each edge } \{i,j\} \subseteq N_r. 
\]
Obviously, we have 
    $w(i,j)=\sum_{h=1}^H (\gamma_h - \gamma_{h-1} )
        w^{[h]} (i,j)$
    for each edge $\{i,j\} \subseteq N_r.$
Let $(N, c^{[h]})$ denote 
    the simple MCST game defined 
    by a 0-1 weighted complete graph 
    $\langle \ComG [N_r], w^{[h]} \rangle$
    for each $h \in \{1,2,\ldots ,H\}$.
Then, Norde, Moretti and Tijs~\cite{norde2004minimum} showed 
    the following decomposition property of MCST game.
\begin{Theorem} \label{Theorem:0-1Decomp}
    The MCST game $(N,c)$ defined by 
    non-negatively weighted complete graph
    $\langle \ComG [N_r], \Vec{w} \rangle$
    can be expressed as
    $c=\sum_{h=1}^H (\gamma_h - \gamma_{h-1} )
        c^{[h]}$.
    Moreover, the Shapley value $\varphi$ of $(N,c)$ is given by
      $\varphi=\sum_{h=1}^H (\gamma_h - \gamma_{h-1} )
        \varphi^{[h]}$, 
        where  $\varphi^{[h]}$ is the Shapley value 
        of  $(N, c^{[h]})$.
\end{Theorem}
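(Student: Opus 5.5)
The plan is to prove the decomposition $c=\sum_{h=1}^H(\gamma_h-\gamma_{h-1})c^{[h]}$ coalition by coalition, and then get the Shapley-value statement from linearity. For $S=\emptyset$ both sides are $0$, so fix a non-empty $S\subseteq N$. The key step is to take one spanning tree of $\ComG[S\cup\{r\}]$ that is simultaneously minimum for $\Vec{w}$ and for every $w^{[h]}$. Once we have such a tree $T$, the identity $w(e)=\sum_h(\gamma_h-\gamma_{h-1})w^{[h]}(e)$ summed over $e\in T$ gives
\[
c(S)=\sum_{e\in T}w(e)=\sum_{h=1}^H(\gamma_h-\gamma_{h-1})\sum_{e\in T}w^{[h]}(e)=\sum_{h=1}^H(\gamma_h-\gamma_{h-1})c^{[h]}(S).
\]

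To build $T$, run Kruskal's algorithm on $\ComG[S\cup\{r\}]$. Sort the edges in nondecreasing order of $\Vec{w}$, breaking ties by some fixed rule, and let $T$ be the tree it returns; $T$ is minimum for $\Vec{w}$. Each $w^{[h]}$ is a nondecreasing function of $w$, namely the indicator of $w\ge\gamma_h$. So the same edge ordering is also nondecreasing for $w^{[h]}$. Kruskal's greedy choices depend only on the ordering, not on the weight values, so the same $T$ is a valid Kruskal output for $w^{[h]}$ and is therefore minimum for $w^{[h]}$ as well.

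If we want to avoid relying on Kruskal's invariance, the cycle-optimality condition gives the same conclusion. A spanning tree is minimum if and only if, for every non-tree edge $e$, each edge $f$ on the tree path between the endpoints of $e$ satisfies $w(f)\le w(e)$. This is the same fact already used in Section~\ref{section:MonteCarlo}. Applying the monotone map $x\mapsto[x\ge\gamma_h]$ preserves each such inequality, so $T$ satisfies the condition for $w^{[h]}$ too. I expect this simultaneous-optimality step to be the only point needing care; everything else is bookkeeping.

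For the second claim, the Shapley value is linear in the characteristic function. Each marginal contribution $c(\Prec(\pi,i)\cup\{i\})-c(\Prec(\pi,i))$ is linear in $c$, and $\varphi_i$ averages these contributions over $\Pi_N$. Substituting the decomposition of $c$ then yields $\varphi=\sum_{h}(\gamma_h-\gamma_{h-1})\varphi^{[h]}$ directly.
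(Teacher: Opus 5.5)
Your proof is correct and complete. The paper does not prove this theorem itself: it attributes the decomposition to Norde, Moretti and Tijs and refers to Ando for a concise proof, so your argument works as a self-contained substitute.

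Your key step is that a Kruskal tree for $\Vec{w}$, built from one fixed nondecreasing edge order, is simultaneously minimum for every threshold weighting $w^{[h]}$, because $x\mapsto[x\ge\gamma_h]$ is monotone. This is the standard argument for this decomposition. Your second step is the same additivity argument the paper uses for Corollary~\ref{Cor:decomp}.

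What your route buys is independence from an external reference for a fact the paper relies on crucially: Theorem~\ref{Theorem:FPRAS-non-neg} rests on Corollary~\ref{Cor:decomp}, which is derived from this theorem. It also makes explicit the slightly stronger statement that one tree certifies all the values $c^{[h]}(S)$ at once.

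One small referencing point in your alternative cycle-condition argument. For $w^{[h]}$ you need the sufficiency direction of the cycle criterion: if every non-tree edge is at least as heavy as every edge on its tree path, the tree is minimum. Section~\ref{section:MonteCarlo} quotes only the necessity direction; sufficiency is what the paper implicitly uses in the step (3)$\Rightarrow$(1) of Theorem~\ref{Theorem:Null-Player}. Since you state the equivalence explicitly and it is standard, this is a citation nit, not a gap.
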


\noindent
A concise proof is presented by Ando in~\cite{ando2012computation}.

For each $h \in \{1,\ldots ,H\}$, 
    let $(N, v^{[h]})$ denote the saving game 
    of $(N, c^{[h]})$ 
    defined by 
\[
    v^{[h]}(S)= \sum_{i \in S} w^{[h]} (r,i) 
        - c^{[h]} (S) \;\;
        (\forall S \subseteq N).
\]
Then, we have the following properties.

\begin{Cor} \label{Cor:decomp}
    For the MCST-saving game $(N,v)$ 
    defined by the MCST game $(N,c)$,
    we have
    $v=\sum_{h=1}^H (\gamma_h - \gamma_{h-1} )
        v^{[h]}$.
    In addition, the Shapley value 
    $\phi$ of $(N,v)$ satisfies that 
      $\phi=\sum_{h=1}^H (\gamma_h - \gamma_{h-1} )
        \phi^{[h]}$, 
        with $\phi^{[h]}$ denoting 
        the Shapley value 
        of  $(N, v^{[h]})$.
\end{Cor}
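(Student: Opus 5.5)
The plan is to derive the decomposition of $v$ directly from Theorem~\ref{Theorem:0-1Decomp}, and then obtain the decomposition of $\phi$ from the linearity of the Shapley value.

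\emph{Step 1: the characteristic functions.} For every non-empty $S\subseteq N$ we write $v(S)=\sum_{i\in S} w(r,i)-c(S)$. We then substitute two identities into this expression:
\begin{itemize}
\item the edgewise identity $w(r,i)=\sum_{h=1}^H(\gamma_h-\gamma_{h-1})w^{[h]}(r,i)$, which is stated just before Theorem~\ref{Theorem:0-1Decomp};
\item the identity $c(S)=\sum_{h=1}^H(\gamma_h-\gamma_{h-1})c^{[h]}(S)$ from Theorem~\ref{Theorem:0-1Decomp}.
\end{itemize}
Exchanging the finite sums gives
\[
v(S)=\sum_{h=1}^H(\gamma_h-\gamma_{h-1})\Bigl(\sum_{i\in S}w^{[h]}(r,i)-c^{[h]}(S)\Bigr)=\sum_{h=1}^H(\gamma_h-\gamma_{h-1})v^{[h]}(S).
\]
For $S=\emptyset$ both sides are $0$, because $v(\emptyset)=v^{[h]}(\emptyset)=0$. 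Hence $v=\sum_h(\gamma_h-\gamma_{h-1})v^{[h]}$ as set functions.

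\emph{Step 2: the Shapley values.} The map from a game to its Shapley value is linear, since $\Merg(\pi,i)$ is a linear functional of the characteristic function and $\phi_i$ is an average of these functionals. Applying $\Merg(\pi,i)$ to the identity from Step 1 and averaging over $\Pi_N$ therefore yields $\phi=\sum_h(\gamma_h-\gamma_{h-1})\phi^{[h]}$.

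As a cross-check, the same identity follows from item 2 of the Remark together with the $\varphi$-decomposition in Theorem~\ref{Theorem:0-1Decomp}. Indeed, $\phi_i=c(\{i\})-\varphi_i$, and both $c(\{i\})=w(r,i)$ and $\varphi_i$ decompose with the same coefficients as $c^{[h]}(\{i\})$ and $\varphi^{[h]}_i$.

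There is no real obstacle here. The only points needing care are:
\begin{itemize}
\item the empty coalition, which is handled separately by definition;
\item the fact that the singleton cost is $c(\{i\})=w(r,i)$ in each layer, that is, $c^{[h]}(\{i\})=w^{[h]}(r,i)$, so the saving games are consistently defined.
\end{itemize}
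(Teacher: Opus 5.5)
Your proof is correct and follows the paper's argument. You substitute the edgewise decomposition of $w(r,i)$ and the decomposition of $c(S)$ from Theorem~\ref{Theorem:0-1Decomp} into $v(S)=\sum_{i\in S}w(r,i)-c(S)$, exchange the finite sums, and then apply linearity of the Shapley value. The paper says ``additivity,'' but linearity is the precise property needed for the scalar weights $\gamma_h-\gamma_{h-1}$, and your separate treatment of $S=\emptyset$ and the cross-check via item~2 of the Remark are harmless extras.
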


\begin{proof}
    The MCST-saving game $(N, v)$ satisfies that
\begin{eqnarray*}
        v(S)
        &=&\sum_{i\in S}w(r,i)-c(S)
        =\sum_{i\in S} 
        \sum_{h=1}^H (\gamma_h - \gamma_{h-1} )
        w^{[h]} (r,i)
        - \sum_{h=1}^H (\gamma_h - \gamma_{h-1} )
        c^{[h]}(S) \\
        &=&  \sum_{h=1}^H (\gamma_h - \gamma_{h-1} ) \left(
        \sum_{i\in S} w^{[h]} (r,i)
        - c^{[h]}(S) 
        \right)
        = \sum_{h=1}^H (\gamma_h - \gamma_{h-1} ) v^{[h]}(S). 
\end{eqnarray*}
The additivity 
    of the Shapley value implies that 
      $\phi=\sum_{h=1}^H (\gamma_h - \gamma_{h-1} )
        \phi^{[h]}$.
\end{proof}

In the following, we describe Algorithm~\ref{algo2}, 
  which is included only for analytical purposes.
The output of this algorithm coincides 
    exactly with that of Algorithm~\ref{algo1}.
In Algorithm~\ref{algo2}, we define that
\[
  \Merg^{[h]} (\pi ,i )
    =v^{[h]}(\Prec (\pi,i) \cup \{i\})-v^{[h]}(\Prec (\pi,i))
\]
    for each $h\in \{1,2,\ldots H\}$
    and $\forall (\pi, i)\in \Pi_N \times N$.

\begin{algorithm}
\caption{}\label{algo2}
\begin{algorithmic}[Algorithm 2]
\Input  
     non-negatively weighted complete graph $\langle \ComG [N_r], \Vec{w}\rangle$
     and a positive integer $M$.
\Output Approximate Shapley value $\phi^A_i$ for the MSCT-saving game $(N,v)$ 
    defined by the non-negatively weighted complete graph  $\langle \ComG [N_r], \Vec{w}\rangle$.
\State  Set  $\phi'^{[h]}_i\gets0$
    $(\forall h\in \{1,2,\ldots ,H\}, \forall i \in N)$.
\State Set $m\gets0$ and $\phi'_i\gets0$
        $(\forall i \in N)$.
\While{$m < M$}
\State  Choose $\pi \in \Pi_N$ randomly.
\State Update 
    $\phi'^{[h]}_i \gets \phi'^{[h]}_i + \Merg^{[h]} (\pi ,i )$ 
    $(\forall h \in \{1,2,\ldots ,H\}, \forall i \in N)$.
\State  Update 
    $\phi'_i \gets \phi'_i + \Merg (\pi ,i )$ 
    $(\forall i \in N)$ and $m \gets m + 1$.
\EndWhile
\State Set  $\phi^{[A][h]}_i=\frac{1}{M}\phi'^{[h]}_i$ 
    $(\forall h \in \{1,2,\ldots ,H\}, \forall i \in N)$.
\State Output $\phi^A_i = \frac{1}{M}\phi'_i$ for each $i \in N$.
\end{algorithmic}
\end{algorithm}

\begin{Theorem}\label{Theorem:FPRAS-non-neg}
Let $\phi^A=(\phi^A_1, \phi^A_2,\ldots , \phi^A_n)$
    denote the vector obtained by applying 
    Algorithm~\ref{algo1} and/or Algorithm~\ref{algo2}
	to a MCST-saving game $(N,v)$
    defined by a non-negatively weighted complete graph.
For any $\varepsilon >0$ and  $0< \delta <1$,
	we have the following.

\noindent 
{\rm (1)}
If we set 
	$\displaystyle M \geq \frac{n^2(n-1)^4 \ln (2H/\delta)}{2\varepsilon^2}$,
	then each player $i \in N$ satisfies that
\[
\Prob 
	\left[ 
		\frac{| \phi^A_i-\phi_i| }{\phi_i}
            < \varepsilon  
	\right] \geq 1-\delta.
\]

\noindent 
{\rm (2)}
 If we set 	
	$\displaystyle M\geq \frac{n^2(n-1)^4 \ln (2nH/\delta)}{2\varepsilon^2}$,
	then 
\[
 \Prob 
	\left[
		\forall i \in N, 
             \frac{| \phi^A_i-\phi_i| }{\phi_i}
            < \varepsilon  
	\right] \geq 1-\delta.
\]
\end{Theorem}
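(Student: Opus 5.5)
Since Algorithm~\ref{algo1} and Algorithm~\ref{algo2} output the same $\phi^A$, I work with Algorithm~\ref{algo2}. The aim is to reduce the statement to Theorem~\ref{Theorem:FPRAS-01}, applied to each simple game $(N,v^{[h]})$, and then recombine through Corollary~\ref{Cor:decomp}.

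First, for every permutation $\pi$ and player $i$, linearity of $v$ in the components gives $\Merg(\pi,i)=\sum_h(\gamma_h-\gamma_{h-1})\Merg^{[h]}(\pi,i)$. Hence $\phi^A_i=\sum_h(\gamma_h-\gamma_{h-1})\phi^{[A][h]}_i$. Together with $\phi_i=\sum_h(\gamma_h-\gamma_{h-1})\phi^{[h]}_i$ (Corollary~\ref{Cor:decomp}) and $\gamma_h-\gamma_{h-1}>0$, this yields
\[
|\phi^A_i-\phi_i|\le\sum_{h=1}^H(\gamma_h-\gamma_{h-1})\,|\phi^{[A][h]}_i-\phi^{[h]}_i| .
\]
It therefore suffices to have $|\phi^{[A][h]}_i-\phi^{[h]}_i|<\varepsilon\phi^{[h]}_i$ simultaneously for all $h$. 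Summing these then gives $|\phi^A_i-\phi_i|<\varepsilon\phi_i$, with strictness holding as long as some $\phi^{[h]}_i>0$, which is true since $\phi_i>0$.

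The obstacle is that $i$ may be a null player in some components $(N,v^{[h]})$ even though it is non-null in $(N,v)$. In that case Corollary~\ref{Cor:ShapleyLB} gives no lower bound on $\phi^{[h]}_i$. I handle this by splitting on $h$. If $i$ is null in $v^{[h]}$, then $\Merg^{[h]}(\pi,i)=0$ for every $\pi$, so $\phi^{[A][h]}_i=\phi^{[h]}_i=0$ deterministically and the term contributes zero error. If $i$ is non-null in $v^{[h]}$, Corollary~\ref{Cor:ShapleyLB} gives $\phi^{[h]}_i\ge 1/(n(n-1))$. Lemma~\ref{Lemma:LB-UB} bounds $\Merg^{[h]}(\pi,i)\in[0,n-1]$, and Lemma~\ref{Lemma:Unbiasedness} applies componentwise. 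The Hoeffding computation in the proof of Theorem~\ref{Theorem:FPRAS-01} then gives
\[
\Prob\left[|\phi^{[A][h]}_i-\phi^{[h]}_i|\ge\varepsilon\phi^{[h]}_i\right]\le 2\exp\left(-\frac{2M\varepsilon^2}{n^2(n-1)^4}\right).
\]
The $M$ samples are i.i.d. permutations shared across $h$, but this does not matter because I only need each marginal bound plus a union bound.

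For (1), I take a union bound over the at most $H$ non-null indices $h$. The failure probability is at most $2H\exp(-2M\varepsilon^2/(n^2(n-1)^4))\le\delta$ when $M\ge n^2(n-1)^4\ln(2H/\delta)/(2\varepsilon^2)$. For (2), I take the union bound over all pairs $(i,h)$, at most $nH$ of them, which gives the $\ln(2nH/\delta)$ requirement. The main care point is the null-component case split above, since without it the relative-error bound per component would be undefined.
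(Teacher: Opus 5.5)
Your proposal is correct and follows essentially the same route as the paper. Both arguments decompose $\phi^A_i$ and $\phi_i$ via Corollary~\ref{Cor:decomp}, set aside the components where $i$ is null (the paper's set ${\cal H}_i$ is exactly your set of non-null indices), apply Hoeffding on the remaining components using the range $[0,n-1]$ and the lower bound $1/(n(n-1))$ from Corollary~\ref{Cor:ShapleyLB}, and take a union bound over at most $H$ (resp.\ $nH$) events. One wording fix: require the per-component strict inequality only for $h$ with $\phi^{[h]}_i>0$, since for null components it reads $0<0$; your case split already handles this correctly.
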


\begin{proof}
For any $h \in \{1,2,\ldots , H\}$, 
    $(N,v^{[h]})$ is a simple MCST-saving game and thus, 
    Lemma~\ref{Lemma:LB-UB} implies that
    $\Merg^{[h]} (\pi, i) \in [0,n-1]$ 
    $(\forall (\pi, i) \in \Pi_N \times N)$.
    
Let $\phi^{[h]}$ denotes the Shapley value of
    $(N, v^{[h]})$ for each $h \in \{1,2, \ldots , H\}$.
Lemma~\ref{Lemma:Unbiasedness} implies that 
	$\Exp \left[ \phi^{[A][h]}_i \right]=\phi^{[h]}_i$ 
    $(\forall h \in \{1,2,\ldots , H\},  \forall i \in N)$.
We define 
$
    {\cal H}_i=\{h \in \{1,2,\ldots ,H\} \mid \phi^{[h]}_i \neq 0 \}.
$
Corollary~\ref{Cor:decomp} and Assumption~\ref{Assumption:non-null} imply that 
     ${\cal H}_i \neq \emptyset$ $(\forall i \in N)$.
When $h \not \in {\cal H}_i$, 
    Lemma~\ref{Lemma:Null=Shap0} implies that
    $\Merg^{[h]}(\pi, i)=0$ $(\forall \pi \in \Pi_N)$ and thus 
    $\phi^{[A][h]}_i =\phi^{[h]}=0$.
If $h \in {\cal H}_i$, 
    then Corollary~\ref{Cor:ShapleyLB} implies that 
    $\phi^{[h]}_i \geq \frac{1}{n(n-1)}$.

The definition of Algorithm~\ref{algo2} 
    directly implies that, 
    for every player $i \in N$,  
\begin{eqnarray*}
    \phi^A_i &=&
        \left(\frac{1}{M}\right) 
        \sum_{m=1}^M \Merg (\pi^{(m)}, i) 
    = \left(\frac{1}{M}\right) 
        \sum_{m=1}^M  
        \bigl(
            v(\Prec (\pi^{(m)}, i) \cup \{i\})- v(\Prec(\pi^{(m)}, i))
        \bigr) \\
    &=& \left(\frac{1}{M}\right) 
        \sum_{m=1}^M   \sum_{h=1}^H (\gamma_h -\gamma_{h-1})
        \left(  v^{[h]}(\Prec (\pi^{(m)}, i) \cup \{i\})   
            - v^{[h]}(\Prec (\pi^{(m)}, i)) \right) \\
    &=& \left(\frac{1}{M}\right) \sum_{m=1}^M \sum_{h=1}^H 
        (\gamma_h - \gamma_{h-1}) \Merg^{[h]} (\pi^{(m)}, i) \\
    &=& \sum_{h=1}^H
        \left(
            (\gamma_h - \gamma_{h-1})  
             \left(\frac{1}{M}\right)
             \sum_{m=1}^M\Merg^{[h]} (\pi^{(m)}, i) 
        \right) 
    =  \sum_{h=1}^H (\gamma_h - \gamma_{h-1}) \phi^{[A][h]},
\end{eqnarray*}
\noindent 
    where $\pi^{(m)}$ denotes the permutation 
    sampled in the $m$-th iteration of Algorithm~\ref{algo2}.

\medskip 
\noindent
(1) If we set 
	$\displaystyle M \geq \frac{n^2(n-1)^4 \ln (2H/\delta)}{2\varepsilon^2}$,
	then 
\[
\begin{array}{l}
\displaystyle 
\Prob \left[
					\frac{\left| 
						\phi^A_i	- \phi_i 
					\right|}{\phi_i}
					< \varepsilon 
			\right]
=
\Prob 
	\left[ 
		\frac{\left| 
            \sum_{h=1}^{H} (\gamma_h - \gamma_{h-1}) \phi_i^{[A][h]}
            -\sum_{h=1}^{H} (\gamma_h - \gamma_{h-1})\phi_i^{[h]}
            \right|
            }{
            \sum_{h=1}^{H} (\gamma_h - \gamma_{h-1})\phi_i^{[h]}}
            < \varepsilon  
	\right] \\
\displaystyle  
\geq \Prob 
	\left[ 
		\frac{
            \sum_{h=1}^H
            (\gamma_h - \gamma_{h-1})
            \left| 
                \phi_i^{[A][h]}-\phi_i^{[h]}
            \right|
            }{
            \sum_{h=1}^H (\gamma_h - \gamma_{h-1})\phi_i^{[h]}}
            < \varepsilon  
	\right] 
= \Prob 
	\left[ 
		\frac{
            \sum_{h\in {\cal H}_i}
            (\gamma_h - \gamma_{h-1})
            \left| 
                \phi_i^{[A][h]}-\phi_i^{[h]}
            \right|
            }{
            \sum_{h\in {\cal H}_i} (\gamma_h - \gamma_{h-1})\phi_i^{[h]}}
            < \varepsilon  
	\right] \\
\displaystyle
\geq \Prob 
	\left[
		\forall h \in {\cal H}_i, \;
        \frac{ (\gamma_h - \gamma_{h-1})
            \left| 
                \phi_i^{[A][h]}-\phi_i^{[h]}
            \right|
        }{ (\gamma_h - \gamma_{h-1})
            \phi_i^{[h]}
        }
            < \varepsilon  
	\right]
= \Prob 
	\left[
		\forall h \in {\cal H}_i, \;
        \frac{
            \left| 
                \phi_i^{[A][h]}-\phi_i^{[h]}
            \right|
        }{
            \phi_i^{[h]}
        }
            < \varepsilon  
	\right] \\
\displaystyle
= 
    1-
    \Prob 
    \left[\exists h \in {\cal H}_i, 
		\frac{| \phi_i^{[A][h]}-\phi_i^{[h]}| }{ \phi_i^{[h]}}
            \geq \varepsilon
	\right] 
\geq 
    1-\sum_{h \in {\cal H}_i}
    \Prob 
    \left[ 
		\left| \phi_i^{[A][h]}-\phi_i^{[h]} \right|  
            \geq \varepsilon \cdot \phi_i^{[h]} 
	\right]  \\
\displaystyle
\geq 1- \sum_{h \in {\cal H}_i}
    \Prob 
	\left[ 
		\left| \phi_i^{[A][h]}-\Exp \left[ \phi_i^{[A][h]} \right] \right| 
            \geq \left( \frac{{\varepsilon}}{n(n-1)} \right)
	\right]  \\
\displaystyle 
\geq 1-  2\sum_{h \in {\cal H}_i}
    \exp 
    \left( 
        \frac{-2M 
        \left( 
            \frac{{\varepsilon}}{n(n-1)}
        \right)^2
        }{
            (n-1)^2
        } 
    \right)
=1- 2\sum_{h \in {\cal H}_i}
    \exp \left( -M \frac{2  \varepsilon^2}{n^2(n-1)^4} \right) \\
\geq  
1- 2H \exp \left( - \ln \left( \frac{2H}{\delta} \right) \right) =
  1-\delta.
 \end{array}
\] 

\medskip 
\noindent
(2) If we set 
	$\displaystyle M\geq \frac{n^2(n-1)^4 \ln (2nH/\delta)}{2\varepsilon^2}$,
	then we have that
\[
\begin{array}{l}
\displaystyle 
\Prob \left[
				\forall i \in N, 
					\frac{\left| 
						\phi^A_i	- \phi_i 
					\right|}{\phi_i}
					< \varepsilon 
			\right]
\geq \Prob 
	\left[
		\forall i \in N, 
        \forall h \in {\cal H}_i,
		\frac{ \left| \phi_i^{[A][h]}-\phi_i^{[h]} \right| 
        }{ \phi_i^{[h]}
        }
        < \varepsilon
    \right]  \\
\displaystyle 
= 1- \Prob 
	\left[
		\exists i \in N, 
        \exists h \in {\cal H}_i,
		\frac{ \left| \phi_i^{[A][h]}-\phi_i^{[h]} \right| 
        }{ \phi_i^{[h]}
        }
        \geq \varepsilon
    \right]  \\
\displaystyle 
\geq 1- \sum_{i\in N} \sum_{h \in {\cal H}_i}
    \Prob 
	\left[ 
		\left| \phi_i^{[A][h]}-\Exp \left[ \phi_i^{[A][h]} \right] \right| 
            \geq \left( \frac{{\varepsilon}}{n(n-1)} \right)
	\right]  \\
\displaystyle
\geq 1- \sum_{i \in N} \sum_{h \in {\cal H}_i}
2 \exp 
        \left( 
            \frac{-2M 
            \left( 
                \frac{{\varepsilon}}{n(n-1)} 
            \right)^2 
            }{
                {(n-1)}^2
            }
        \right) 
= 1- \sum_{i\in N} \sum_{h \in {\cal H}_i}
    2 \exp 
        \left( -M 
            \frac{2 \varepsilon^2 }{n^2(n-1)^4}
        \right) \\
\displaystyle
 \geq 
 1- 2nH  \exp 
	\left( 
		- \ln \left( \frac{2nH}{\delta} \right) 
 	\right) 
= 1-\delta.  
\end{array}
\]
\end{proof}

\section{Computational Experiments} \label{section:comp}

We conducted computational experiments 
    to evaluate the accuracy and efficiency
    of the proposed Monte Carlo-based approximation method 
    for computing the Shapley value in MCST games. 
In particular, 
    we empirically examined how the sample size required 
    for a prescribed accuracy depends 
    on the problem size and approximation parameters.

For instance generation,
    we used structural results due to Ando~\cite{ando2012computation}, 
    which show that 
    when the underlying graph is chordal 
    and edge weights are restricted to $\{0,1\}$, 
    the Shapley value of the corresponding MCST game 
    can be computed in polynomial time. 
This property enables us to compute 
    the exact Shapley value efficiently 
    and to use it as ground truth in our experiments. 
Specifically, we first generated a chordal graph 
    in which all edges have weight zero. 
Then, by adding edges of weight one, 
    we constructed a complete graph whose edge weights belong to $\{0,1\}$. 
By this procedure, we obtained 0-1 weighted complete graph instances 
    while retaining the ability to compute 
    the exact Shapley value via the chordal structure 
    prior to edge completion.

The number of players was set to $n = 3,4,5,\ldots,10$. 
For each value of $n$, we generated instances 
    on which Player~1 is not a null player. 
Instance generation was repeated until three such instances were obtained 
    for each $n$. 
For each generated instance, the exact Shapley value 
    of Player~1 in the corresponding saving game 
    was computed and taken as ground truth.

To approximate the Shapley value, 
    we applied the Monte Carlo algorithm 
    with an increasing number of samples. 
The sample size $M$ was initialized at $M=100$ 
    and increased in increments of $100$. 
For each value of $M$, the approximation procedure 
    was repeated independently $20$ times. 
In each trial, the relative error
    between the estimated Shapley value and the exact value 
    was computed.

For a given accuracy parameter $\varepsilon \in \{0.9,0.8,\ldots,0.1\}$,
    we counted the number of trials 
    in which the relative error did not exceed $\varepsilon$, 
    and defined the success rate 
    as the ratio of such trials 
    to the total number of repetitions. 
Fixing $\delta = 0.25$, 
    we regarded a sample size $M$ as sufficient 
    if the success rate was at least $1-\delta$ 
    for all generated instances simultaneously. 
The minimum sample size $M$ 
    satisfying this condition was recorded 
    for each value of $\varepsilon$.

Finally, we visualized the results by plotting
    the relationship between the sample size $M$ and $1/\varepsilon^{2}$, 
    as well as the relationship 
    between the number of players $n$ 
    and the empirically observed sample size 
    required to achieve the prescribed accuracy.

These results confirm that 
    the observed sampling complexity 
    is consistent with the theoretical bounds 
    and to demonstrate 
    the practical efficiency 
    of the proposed approximation scheme.

\subsection{Relationship between the Sample Size $M$ and $1/\varepsilon^{2}$}

In this subsection, we investigate the relationship between the sample size $M$ and the accuracy parameter $\varepsilon$, focusing on whether the empirical behavior of the Monte Carlo method is consistent with the theoretical bounds derived for the FPRAS.

According to the theoretical analysis, the number of samples required to approximate the Shapley value within a relative error $\varepsilon$ with high probability is proportional to $1/\varepsilon^{2}$. To verify this relationship empirically, we fixed the number of players $n$ and varied the accuracy parameter $\varepsilon$ over the range $\varepsilon = 0.9, 0.8, \ldots, 0.1$.

For each value of $\varepsilon$, we increased the sample size $M$ incrementally and evaluated the approximation performance using the procedure described in the previous subsection. For a given $M$, the approximation was repeated independently $20$ times for each instance, and the relative error was computed in each trial. The success rate was defined as the fraction of trials in which the relative error did not exceed $\varepsilon$.

Fixing $\delta = 0.25$, we regarded a sample size $M$ as sufficient if the success rate was at least $1-\delta$ for all generated instances simultaneously. The minimum value of $M$ satisfying this condition was recorded for each $\varepsilon$.

By plotting the obtained sample size $M$ against $1/\varepsilon^{2}$, we examine whether a linear relationship emerges, as predicted by the theoretical analysis. This comparison allows us to assess the validity of the asymptotic sample complexity bound in practical settings.

\begin{figure}[htbp]
  \centering
  \includegraphics[width=0.7\linewidth]{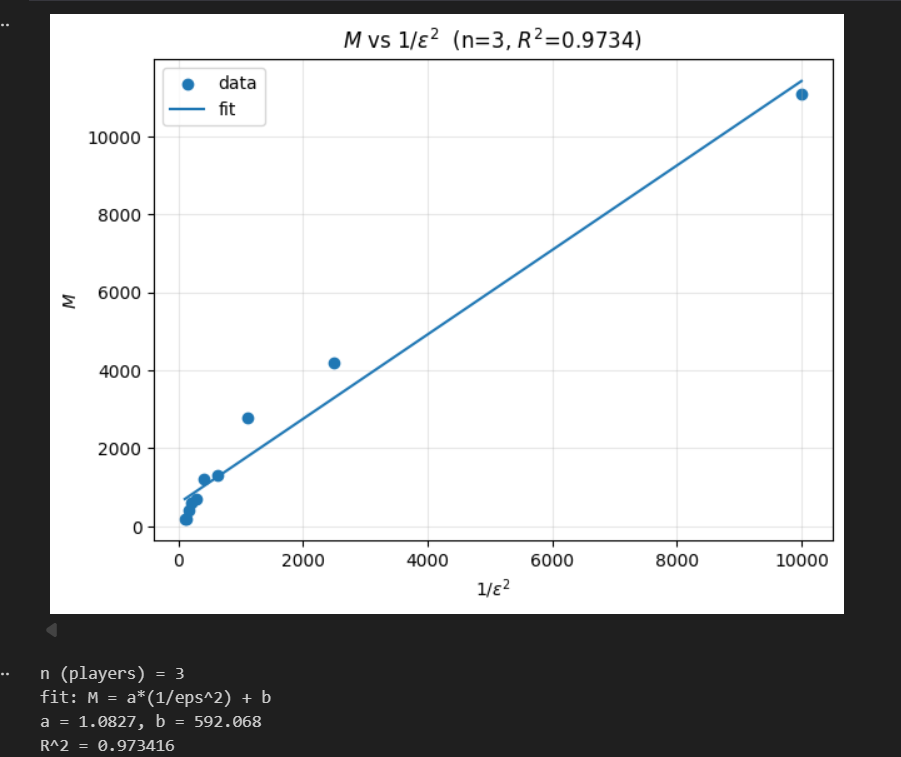}
  \caption{Relationship between the sample size $M$ and $1/\varepsilon^{2}$($n=3$).}
\end{figure}

\subsection{Relationship between the Number of Players $n$ and the Required Sample Size $M$}

In this subsection, we examine how the sample size required to achieve a prescribed accuracy depends on the number of players $n$, and compare the empirical results with the theoretical sample complexity bound.

For a fixed accuracy parameter $\varepsilon$, we varied the number of players $n$ and evaluated the minimum sample size $M$ that satisfies the success criterion described in the previous subsection. According to the theoretical analysis of the Monte Carlo approximation scheme, the sample size required to guarantee a relative error of at most $\varepsilon$ with probability at least $1-\delta$ is bounded by
$\mbox{\(\left\lceil \frac{n^2 (n-1)^4 \ln(2/\delta)}{2\varepsilon^2} \right\rceil\)}$.

To assess the validity of this bound in practice, 
    we compared the theoretical sample size with 
    the result obtained from computational experiments. 
For each value of $\varepsilon$, 
    we plotted the number of players $n$ on the horizontal axis, 
    and on the vertical axis we plotted 
    the natural logarithm of the theoretical sample size 
    given by the above formula, 
    as well as the natural logarithm 
    of the empirically observed sample size $M$.

By visualizing both quantities on a logarithmic scale, we aim to evaluate whether the growth rate of the experimentally required sample size with respect to $n$ is consistent with the theoretical prediction. This comparison provides insight into the tightness of the theoretical bound and the scalability of the proposed approximation method as the problem size increases.

\begin{figure}[htbp]
  \centering
  \includegraphics[width=0.7\linewidth]{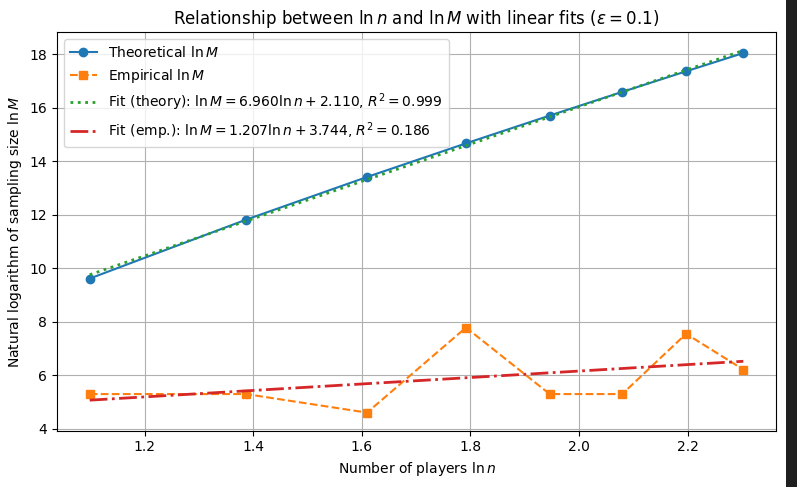}
  \caption{Relationship between the sample size $M$ and $n$($\varepsilon=0.1$).}
  \label{n=3}
\end{figure}

\section{Summary of Contributions and Conclusions}
\label{section:conclusion}

\subsection{MCST Game} \label{subsection:MCSTgames}

In this subsection, we summarize some results corresponding 
    MCST game obtained in this paper.

\begin{Cor}
Let $(N, c)$ be an MCST game defined by 
    non-negatively weighted complete graph 
    $\langle \ComG [N_r], \Vec{w} \rangle$.
Denote $\varphi$ be the Shapley value of $(N, c)$.
Every player $i\in N$ satisfies $\varphi_i \leq w(r,i)$.
For any player $i \in N$, 
    the following statements are equivalent:
\begin{description}
\item[\rm (0)]  the Shapley value of player $i \in N$ 
    satisfies $\varphi_i=w(r,i),$
\item[\rm (1)] player $i$ is a dummy player in $(N, c)$
    {\rm (i.e., $\forall S \subseteq N\setminus \{i\}, c(S \cup \{i\})=c(S)+c(\{i\})$)}, 
\item[\rm (2)] $\forall j \in N\setminus \{i\}$, $c(\{i,j\})=c(\{i\})+c(\{j\})$, 
\item[\rm (3)] $\forall j \in N\setminus \{i\}$,  
    $w(r,i)\leq w(i,j) \geq w(r,j)$. 
\end{description} 

\noindent
When we consider a simple MCST game $(N, c)$ defined by
    0-1 weighted complete graph
    $\langle \ComG [N_r], \widetilde{\Vec{w}} \rangle$,
    we have the following.
\begin{description}
\item[\rm (4)] A player $i \in N$ is not a dummy player in $(N, c)$, 
    if and only if   $i$ satisfies \\
 $[\exists j \in N\setminus \{i\},$  
 $(\widetilde{w}(r, i),\widetilde{w}(i,j),\widetilde{w}(r, j))
    \in \{(1,0,0), (1,0,1), (0,0,1)\}]$.
\item[\rm (5)] If a player  $i \in N$ is not a dummy player in $(N, c)$, 
    then $\varphi_i \leq w(r,i)- \frac{1}{n(n-1)}$. 
\end{description} 
\end{Cor}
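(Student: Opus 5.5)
The plan is to transfer every statement about the saving game $(N,v)$ to the cost game $(N,c)$ through the identity $\phi_i+\varphi_i=c(\{i\})=w(r,i)$ from the Remark (part 2). This identity follows from linearity of the Shapley value: $v=a-c$ with $a(S)=\sum_{j\in S}c(\{j\})$ an additive game whose Shapley value is $(c(\{j\}))_j$. We then read off the conclusions from Theorem~\ref{Theorem:Null-Player} and Corollary~\ref{Cor:ShapleyLB}.

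The first step is the upper bound. The MCST-saving game is monotone, so $\phi_i\ge 0$, and hence $\varphi_i=w(r,i)-\phi_i\le w(r,i)$.

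Next come the equivalences. Statement (0) $\varphi_i=w(r,i)$ is the same as $\phi_i=0$, which is condition (0) of Theorem~\ref{Theorem:Null-Player}. For (1), $i$ is a dummy player in $(N,c)$ if and only if $c(S\cup\{i\})=c(S)+c(\{i\})$ for all $S\subseteq N\setminus\{i\}$. Substituting $v(S)=\sum_{j\in S}c(\{j\})-c(S)$, this becomes $v(S\cup\{i\})=v(S)$, i.e., $i$ is a null player in $(N,v)$ (part 1 of the Remark). We check this directly for $S=\emptyset$ as well, using $c(\emptyset)=0$ and $v(\{i\})=0$. For (2), the same substitution with $S=\{j\}$ gives $c(\{i,j\})=c(\{i\})+c(\{j\})$ if and only if $v(\{i,j\})=v(\{j\})$, which is condition (2) of Theorem~\ref{Theorem:Null-Player}. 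Condition (3) is literally condition (3) of that theorem. Together these give the equivalence of (0)--(3).

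For the 0-1 case, (4) follows from the equivalence of (1) and (2) in Corollary~\ref{Cor:ShapleyLB}, after translating ``not null in $(N,v)$'' into ``not dummy in $(N,c)$'' as above. For (5), a non-dummy player is non-null, so Corollary~\ref{Cor:ShapleyLB}(3) gives $\phi_i\ge 1/(n(n-1))$, and then $\varphi_i=w(r,i)-\phi_i\le w(r,i)-1/(n(n-1))$.

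The only step that needs any care is justifying the Remark, which the paper states without proof. I would verify it explicitly in one line each, as in the arguments above; there is no real obstacle.
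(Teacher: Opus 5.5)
Your proposal is correct and follows the route the paper takes implicitly: you transport Theorem~\ref{Theorem:Null-Player} and Corollary~\ref{Cor:ShapleyLB} to $(N,c)$ through the Remark's identities $\phi_i+\varphi_i=c(\{i\})=w(r,i)$ and ``null in $(N,v)$ iff dummy in $(N,c)$'', and you get the bound $\varphi_i\le w(r,i)$ from $\phi_i\ge 0$ via monotonicity. Your one-line justifications of the Remark fill in what the paper states without proof: linearity of the Shapley value applied to $v=a-c$ with $a$ additive, and direct substitution for the null/dummy equivalence, including the case $S=\emptyset$.
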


We describe Algorithm~\ref{algo-Cost}
    for computing Approximate Shapley value $\varphi^A$
    for the MSCT game $(N,c)$. 
    
\begin{algorithm}
\caption{}\label{algo-Cost}
\begin{algorithmic}[Algorithm 1]
\Input  
     non-negatively weighted complete graph $\langle \ComG [N_r], \Vec{w}\rangle$
     and a positive integer $M$.
\Output Approximate Shapley value $\varphi^A_i$ 
    for the MSCT game $(N,c)$ 
    defined by the non-negatively weighted complete graph  $\langle \ComG [N_r], \Vec{w}\rangle$. 
    \medskip 
    
\State Execute Algorithm~\ref{algo1} 
    with the number of samples set to $M$,
      and obtain a vector~$\phi^A$.
\State Output $\varphi^A_i =w(r,i)-\phi^A_i$ for each $i \in N$.
\end{algorithmic}
\end{algorithm}

Then we have the following results.

\begin{Cor}
Let $\varphi^A=(\varphi^A_1, \varphi^A_2,\ldots , \varphi^A_n)$
    denote the vector obtained by Algorithm~\ref{algo-Cost}.
Suppose that Algorithm~\ref{algo-Cost} is applied 
    to a simple MCST game $(N,c)$
    defined by  $\langle \ComG [N_r], \widetilde{\Vec{w}} \rangle$.
Then, for any $\varepsilon >0$ and \mbox{$0< \delta <1$}, 
    the following statements hold.

\begin{description}
\item[{\rm (1)}]
If we set 
	$\displaystyle M \geq \frac{n^2 (n-1)^4 \ln (2/\delta)}{2\varepsilon^2}$,
	then each player $i \in N$ satisfies that
\[
\Prob 
	\left[ 
		\frac{| \varphi^A_i-\varphi_i| }{\widetilde{w}(r,i)-\varphi_i}
            < \varepsilon  
	\right] \geq 1-\delta.
\] 

\item[{\rm (2)}]
 If we set 	
	$\displaystyle M\geq \frac{n^2 (n-1)^4 \ln (2n/\delta)}{2\varepsilon^2}$,
	then \
\[
 \Prob 
	\left[
		\forall i \in N, 
             \frac{| \varphi^A_i-\varphi_i| }{\widetilde{w}(r,i)-\varphi_i}
            < \varepsilon  
	\right] \geq 1-\delta.
\]
\end{description}

\noindent
Next, consider an MCST game $(N, c)$ defined by
    a non-negatively weighted complete graph
    $\langle \ComG [N_r], \Vec{w} \rangle$.
Let $H$ denote the number of mutually distinct 
    positive edge-weights, that is, 
    $H$ is the size of the set 
   $\{\gamma \in \mathbb{R} 
    \mid \exists \{i,j\}\subseteq N_r, \gamma=w(i,j)>0 \}$.
Then, for any $\varepsilon >0$ and \mbox{$0< \delta <1$,}
    the following statements hold.
    
\begin{description}
\item[{\rm (3)}]
If we set 
	$\displaystyle M \geq \frac{n^2(n-1)^4 \ln (2H/\delta)}{2\varepsilon^2}$,
	then each player $i \in N$ satisfies that
\[
\Prob 
	\left[ 
		\frac{| \varphi^A_i-\varphi_i| }{w(r,i)-\varphi_i}
            < \varepsilon  
	\right] \geq 1-\delta.
\] 

\item[{\rm (4)}]
 If we set 	
	$\displaystyle M\geq \frac{n^2(n-1)^4 \ln (2nH/\delta)}{2\varepsilon^2}$,
	then 
\[
 \Prob 
	\left[
		\forall i \in N, 
             \frac{| \varphi^A_i-\varphi_i| }{w(r,i)-\varphi_i}
            < \varepsilon  
	\right] \geq 1-\delta.
\]
\end{description}
\end{Cor}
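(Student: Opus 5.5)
The plan is to reduce every statement to the corresponding saving-game result through the identities linking $(N,c)$ and its saving game $(N,v)$. The key fact is part 2 of the Remark: $\phi_i+\varphi_i=c(\{i\})=w(r,i)$. I will justify it directly, since the Remark is stated without proof. The saving game is $v=a-c$, where $a(S)=\sum_{j\in S}w(r,j)$ is an additive game. By linearity of the Shapley value, and because the Shapley value of an additive game gives player $i$ exactly $w(r,i)$, we get $\phi=\Vec{w}(r,\cdot)-\varphi$.

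Step 1: the Algorithm~\ref{algo-Cost} output. By construction $\varphi^A_i=w(r,i)-\phi^A_i$, where $\phi^A$ is the output of Algorithm~\ref{algo1} run on the saving game $(N,v)$ with the same $M$. Combining this with the identity above gives
\[
\varphi^A_i-\varphi_i=\bigl(w(r,i)-\phi^A_i\bigr)-\bigl(w(r,i)-\phi_i\bigr)=\phi_i-\phi^A_i .
\]
The denominator appearing in the statement is $w(r,i)-\varphi_i=\phi_i$, or $\widetilde{w}(r,i)-\varphi_i=\phi_i$ in the simple case. Hence, pointwise on every sample path,
\[
\frac{|\varphi^A_i-\varphi_i|}{w(r,i)-\varphi_i}=\frac{|\phi^A_i-\phi_i|}{\phi_i}.
\]
The events in (1)--(4) therefore coincide exactly with the events in Theorem~\ref{Theorem:FPRAS-01}(1),(2) for the simple case and in Theorem~\ref{Theorem:FPRAS-non-neg}(1),(2) for the general case. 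The sample-size thresholds match verbatim, so the probability bounds transfer directly.

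Step 2: well-definedness. Those theorems rely on Assumption~\ref{Assumption:non-null}, which requires every player to be non-null in $(N,v)$. Equivalently, every player must be non-dummy in $(N,c)$, so that $\phi_i>0$ and the denominator is nonzero. I will state this assumption explicitly for the corollary. To justify its harmlessness, I will use the preceding Corollary (items (0)--(3), proved via Theorem~\ref{Theorem:Null-Player} and the Remark): dummy players can be detected in $O(n)$ time each via condition (3). For such players $\varphi_i=w(r,i)$ exactly, so they can be removed and handled separately, as in Section~\ref{section:NullPlayer}.

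The main obstacle is not probabilistic, since Steps 1 and 2 reduce everything to already proven theorems. The real care is needed in interpreting the statement. The guarantee is relative to $w(r,i)-\varphi_i$, not to $\varphi_i$. So I will make sure the reduction is an exact identity of random variables, not an inequality, and that the case $\phi_i=0$ is excluded via the standing assumption.
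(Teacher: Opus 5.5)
Your proposal is correct and matches the argument the paper leaves implicit; the paper states this corollary without a proof. By part 2 of the Remark, $\phi_i+\varphi_i=w(r,i)$, so $|\varphi^A_i-\varphi_i|/(w(r,i)-\varphi_i)$ equals $|\phi^A_i-\phi_i|/\phi_i$ on every sample path, and (1)--(4) then follow directly from Theorems~\ref{Theorem:FPRAS-01} and~\ref{Theorem:FPRAS-non-neg}. Your linearity justification of the Remark and your explicit non-dummy hypothesis (equivalently Assumption~\ref{Assumption:non-null}, which keeps the denominator nonzero) fill in details the paper only assumes tacitly.
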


\subsection{Concluding Remarks}

In this paper, we studied the problem 
    of approximating the Shapley value 
    in minimum cost spanning tree (MCST) games 
    by introducing and analyzing 
    the associated MCST-saving games. 
We first established a necessary and sufficient condition 
    for identifying null players 
    and derived lower bounds 
    on the Shapley value of non-null players. 
These structural results enabled 
    an efficient reduction of the problem 
    by eliminating null players 
    in a pre-processing step.

Building on these properties, 
    we developed a Monte Carlo 
    based fully polynomial-time 
    randomized approximation scheme (FPRAS) 
    for computing the Shapley value 
    in MCST-saving games. 
The proposed method achieves 
    a multiplicative (relative-error) 
    approximation guarantee, 
    in contrast to existing 
    Monte Carlo approaches 
    that provide only additive-error bounds. 
We further analyzed the computational complexity 
    of the algorithm. 

Future research directions include 
    extending the proposed approach to broader classes 
    of cooperative games arising from network design problems
    and improving sampling efficiency 
    through the use of saving games

\bibliography{refer}

\end{document}